\documentclass[%
 reprint,
 superscriptaddress,
 amsmath,amssymb,
 aps,
 prl,
 nobibnotes,
 longbibliography,
 floatfix,
]{revtex4-2}
\usepackage[dvipsnames]{xcolor}
\usepackage{float}
\usepackage{subcaption}
\usepackage{stackengine}
\usepackage{graphicx}
\usepackage{array}[=2016-10-06]
\usepackage{dcolumn}

\usepackage{hyperref}
\hypersetup{
  colorlinks=true,
  hypertexnames=false,
  linktocpage,
  colorlinks=true,
  urlcolor=magenta!90!black,
  linkcolor=blue!60!black,
  citecolor=black!60
}

\usepackage{amsthm,bm}
\usepackage{booktabs}
\usepackage{braket}
\usepackage{physics}
\usepackage{tikz}
\usetikzlibrary{quantikz2}

\usepackage[ruled, vlined, linesnumbered]{algorithm2e}
\usepackage{xprintlen}
\usepackage{layouts}
\usepackage{comment}
\usepackage[capitalize]{cleveref}

\usepackage{tcolorbox}

\theoremstyle{definition}
\newtheorem{theorem}{Theorem}

\newtheorem{lemma}[theorem]{Lemma}
\newtheorem{proposition}[theorem]{Proposition}

\newtheorem*{syndromerestatement}{Theorem~\ref*{proposition: syndrome sampling}}

\usepackage{mathrsfs}
\usepackage{bibunits}
\usepackage{etoolbox}
\defaultbibliographystyle{apsrev4-2}
\makeatletter
\newcommand{\bibliographynamespace}[1]{%
  \def\@extra@b@citeb{#1}%
  \def\@extra@binfo{#1}%
  \@startbibunitorrelax
  \immediate\write\@bibunitaux{\string\citation{paperBibliographyControl}}%
}
\patchcmd{\NAT@bibsetnum}{\ref{LastBibItem}}%
  {\ref{LastBibItem-\@extra@b@citeb}}{}%
  {\PackageError{paper}{Could not scope the bibliography width label}{}}
\patchcmd{\endthebibliography}{\label{LastBibItem}}%
  {\label{LastBibItem-\@extra@b@citeb}}{}%
  {\PackageError{paper}{Could not scope the final bibliography label}{}}
\makeatother
\begin{document}
\newcommand{\papertitle}{Efficient fidelity simulation of high-rate magic distillation circuits}
\title{\papertitle}
\author{Xiao Xiao}
\email{xiaoxiao@umd.edu}
\affiliation{QuEra Computing Inc., 1380 Soldiers Field Road, Boston, MA 02135, USA}
 \affiliation{
 Joint Center for Quantum Information and Computer Science, University of Maryland,
College Park, Maryland 20742, USA.
}
\author{Dominik Hangleiter}
 \affiliation{Simons Institute for the Theory of Computing, University of California at Berkeley, Berkeley, California 94270, USA.}
\affiliation{Institute for Theoretical Physics, ETH Z\"urich, 8093 Zürich, Switzerland.}
\author{J. Pablo Bonilla Ataides}
\affiliation{
 Department of Physics, Harvard University, Cambridge, Massachusetts 02138, USA.
}
\author{Rohan Mehta}
\affiliation{
 Department of Physics, Harvard University, Cambridge, Massachusetts 02138, USA.
}
\author{Varun Menon}
\affiliation{
 Department of Physics, Harvard University, Cambridge, Massachusetts 02138, USA.
}
\author{Mikhail D. Lukin}
\affiliation{
 Department of Physics, Harvard University, Cambridge, Massachusetts 02138, USA.
}
\author{Michael J. Gullans}
\email{mgullans@quera.com}
\affiliation{QuEra Computing Inc., 1380 Soldiers Field Road, Boston, MA 02135, USA}
 \affiliation{
 Joint Center for Quantum Information and Computer Science, University of Maryland,
College Park, Maryland 20742, USA.
}
\begin{bibunit}
\bibliographynamespace{main}
\makeatletter
\let\addcontentsline\@gobblethree
\makeatother
\date{\today}
\begin{abstract}
Non-Clifford gates are essential for avoiding classical simulability under the Gottesman-Knill theorem. An error-corrected circuit with non-Clifford gates will therefore be an important element for realizing useful fault-tolerant quantum algorithms. However, accurately understanding the fault-tolerant performance of such circuits faces challenges due to the classical hardness of simulating these circuits. We avoid simulating logical measurements and develop exact and efficient classical methods for benchmarking circuits
composed of $X$, CNOT, and diagonal gates in the Clifford hierarchy, including state and gate teleportation gadgets with feedforward. For circuits with third-level diagonal gates, our benchmarking algorithm efficiently samples syndrome statistics and estimates
the output logical fidelity and expectation values of certain
Clifford observables, including all Pauli observables.
The evaluation of syndrome expectation values
also extends to circuits with fourth-level diagonal gates. Our algorithm scales polynomially in the number of physical qubits $n$ and circuit depth $T$, independent of the number of logical qubits. The benchmarking applications include IQP sampling and magic state preparation/cultivation. We demonstrate the practical utility of our simulation algorithms by using them to optimize high-rate magic state distillation factories based on three copies of a [[27,3,3]] tricycle code.
\end{abstract}

\maketitle

\textit{Introduction}.---
Fault-tolerant quantum computation must suppress physical errors, while computations beyond the scope of efficient stabilizer simulation additionally require logical operations outside the Clifford group
\cite{gottesman1998heisenberg,aaronson2004improved,gidney2021stim}. Logical non-Clifford operations can be supplied by gate teleportation \cite{gottesman1999quantum} using encoded magic states prepared through distillation \cite{bravyi2005universal}, cultivation \cite{gidney2024magic,rosenfeld2025magic}, or protocols that access transversal non-Clifford gates through code switching \cite{daguerre2025experimental}. Alternatively, when the required non-Clifford operations are transversal in the chosen code, they may be applied directly to the encoded data \cite{bluvstein2024logical,hangleiter2025fault}.

As such operations are implemented on increasingly large encoded
systems, accurately predicting their performance under realistic noise becomes an essential part of designing and validating fault-tolerant protocols. For Clifford circuits subject to stochastic Pauli noise, Monte Carlo
benchmarking is efficient: one samples a fault trajectory, propagates
the faults with a stabilizer simulator, and determines the resulting
syndrome and logical action. Non-Clifford gates instead propagate Pauli
faults to non-Pauli operators, and generic exact simulation typically
incurs a cost exponential in the number of non-Clifford gates or in an
associated measure of magic \cite{surti2026efficient}. Fault-tolerant
benchmarking imposes an additional requirement: the simulator must
retain syndrome-resolved information, because the recovered logical
state depends on the decoder and on any syndrome-dependent
postselection. This makes direct benchmarking difficult for protocols
containing an extensive number of logical non-Clifford gates.

Here, we develop techniques for predicting the performance of a family of noisy, fault-tolerant non-Clifford circuits. 
%
The circuit family we consider arises naturally in quantum codes admitting transversal or constant-depth non-Clifford gates from the third level of the Clifford hierarchy. 
Examples include triorthogonal and three-dimensional color codes with transversal $T$-type gates
\cite{bravyi2012magic,kubica2015universal}, three-dimensional surface codes with transversal $\mathrm{CCZ}$ gates \cite{vasmer2018three}, and more recent quantum-LDPC constructions based on rainbow codes, algebraic products, sheaves, and homological or balanced products, whose physical non-Clifford circuits are transversal $T$ gates or constant-depth $\mathrm{CCZ}$ circuits
\cite{scruby2024quantum,golowich2025quantum,lin2024transversal,
zhu2025topological,zhu2025transversal,breuckmann2026cups,menon2026magic}.
These gates support two broad classes of applications relevant to our simulation method. First, they appear in nonuniversal but classically hard sampling circuits, including IQP-type protocols \cite{hangleiter2025fault,bluvstein2024logical}. Second, they provide building blocks for magic-state preparation. This includes distillation protocols based on triorthogonal and related codes, direct preparation of encoded magic states using transversal or constant-depth non-Clifford gates \cite{daguerre2025experimental,menon2026magic, li2025transversal}, and up to a Clifford basis change, cultivation protocols based on fault-tolerant logical checks and postselection \cite{gidney2024magic,rosenfeld2025magic,surti2026efficient}. 

\textit{Main results}.---
We develop Monte Carlo methods for sampling syndrome information and benchmarking the logical performance of 
circuits composed of $X$, CNOT, and diagonal gates in the Clifford hierarchy under circuit-level stochastic Pauli noise. The central observation is that benchmarking a fault-tolerant non-Clifford circuit is easier than fully simulating it. Our algorithms backpropagate faults, syndrome measurements, recovery operations, and code-space projectors through the ideal
circuit to the input. We exploit the fact that after propagation, Pauli stabilizer
operators, errors, and corrections lie at a lower level of the
Clifford hierarchy than the circuit.
The remaining ideal evolution then cancels from the logical
fidelity overlap, decoupling its evaluation from the explicit
non-Clifford evolution.

For circuits with third-level diagonal gates including gate teleportation, we provide efficient classical algorithms for
\begin{enumerate}
    \item exactly sampling from the syndrome distribution, 
    \item estimating the logical
    fidelity, and 
    \item efficiently evaluating Pauli-observable expectation values,
    both directly on the noisy output and after complete syndrome
    projection, Pauli recovery, or postselection.
\end{enumerate}
Furthermore, for circuits with fourth-level diagonal gates, we construct
unbiased estimators of individual stabilizer-check expectation values.
For each sampled fault trajectory, such an expectation reduces to an overlap between stabilizer states generated by a Clifford commutator.
Typical syndrome extraction circuits at any point of the circuit are included in the family, and moreover, the efficiency of our algorithms remains unchanged by adding arbitrary Clifford circuits before and after the affine-diagonal gates.
Because the level-three algorithm produces exact syndrome samples, it can be
paired with arbitrary decoders and syndrome-dependent postselection rules.

When each syndrome check is affected by faults from only a constant
number of circuit layers, as in repeated syndrome extraction
using consecutive-round parity checks, each Monte Carlo
sample takes $\mathcal O(n^3T)$ time after preprocessing. Here $n$ is the total number of simulated
data and ancilla qubits and $T$ is the circuit depth
(see \cref{subsection: computational cost}).
We stress that our simulation algorithms have no separate dependence on the number of encoded qubits, in contrast to earlier work \cite{surti2026efficient}, 
making them applicable to high-rate codes.

\textit{Affine-Diagonal circuits}.---
We consider circuits generated by $X$, $\mathrm{CNOT}$, and diagonal gates in the $\ell$th-level of the Clifford hierarchy. This gate set generates the same group as the CNOT-dihedral gate set that includes the only non-Clifford element $R_\ell=\operatorname{diag}(1,e^{2\pi i/2^\ell})$ \cite{cross2016scalable,amy2016finite}. Let $\mathcal{D}^{(\ell)}_n$ denote the abelian group of computational-basis diagonal gates in the $\ell$th level of the Clifford hierarchy~\cite{zeng2008semi,cui2017diagonal}. A finite-precision diagonal gate can be written as $D_f\ket{\mathbf{x}}=\exp[2\pi i f(\mathbf{x})/2^q]\ket{\mathbf{x}}$, where $f:\mathbb Z_2^n\rightarrow\mathbb Z_{2^q}$ is called a phase polynomial. For a monomial $g$ of degree $\deg(g)$ and phase precision $q$, its Clifford-hierarchy level is $(q-1)+\deg(g)$. Thus, for example, $\mathrm{T}$, controlled-$S$, and $\mathrm{CCZ}$ all belong to $\mathcal D_n^{(3)}$.

The gates $\mathrm{X}$ and $\mathrm{CNOT}$ generate the affine group $\mathcal A_n\cong\mathbb Z_2^n\rtimes\mathrm{GL}(n,2)$, whose elements act as $A\ket{\mathbf x}=\ket{L\mathbf x+\mathbf b}$ for $L\in\mathrm{GL}(n,2)$ and $\mathbf b\in\mathbb Z_2^n$. Since an affine transformation only permutes computational-basis states, $\mathcal A_n$ normalizes $\mathcal D_n^{(\ell)}$. The circuits generated by $\{X,\mathrm{CNOT},\mathcal D_n^{(\ell)}\}$ therefore form the affine--diagonal group
\begin{equation}
    \mathcal{AD}^{(\ell)}_n
    =\mathcal D_n^{(\ell)}\rtimes\mathcal A_n,
    \qquad
    U=D_fA .
    \label{equation: affine diagonal normal form}
\end{equation}
Here $D_f\in\mathcal D_n^{(\ell)}$ and $A\in\mathcal A_n$, and we refer to $U=D_fA$ as the \emph{affine--diagonal normal form}. It separates the diagonal phase operation $D_f$
from the affine transformation $A$ of computational-basis states. This structure is the starting point for our analysis of Pauli-error propagation and the simulation algorithms below.

\textit{Error propagation and commutators}.---
The affine--diagonal normal form directly constrains how Pauli errors propagate through the circuit. For $D\in\mathcal D_n^{(\ell)}$ and $P\in\mathcal P_n$, propagation through the diagonal component can be written as $DPD^\dagger=[D,P]P$, and the group commutator $[D,P]=DPD^{\dagger}P^{\dagger}$ is diagonal and belongs to $\mathcal D_n^{(\ell-1)}$~\cite{cui2017diagonal}. This hierarchy reduction gives the following structure for errors propagated through an affine--diagonal circuit.

\begin{lemma}[Pauli error propagation]
\label[lemma]{proposition: propagation}
Let the level $\ell\geq2$, $U\in\mathcal{AD}_n^{(\ell)}$, and
$P\in\mathcal P_n$. Then
\begin{equation}
    UPU^\dagger\in
    \mathcal{XD}_n^{(\ell-1)}
    \equiv
    \mathcal D_n^{(\ell-1)}
    \rtimes\langle X\rangle_n .
    \label{equation: propagated error group}
\end{equation}
\end{lemma}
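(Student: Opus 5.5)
The plan is to use the affine--diagonal normal form $U=D_fA$ of \cref{equation: affine diagonal normal form} and conjugate $P$ through the affine part and then the diagonal part. Write the Pauli as $P=\omega\, Z^{\mathbf a}X^{\mathbf c}$ with a phase $\omega\in\{\pm1,\pm i\}$, and read $\langle X\rangle_n$ as the group generated by the $X$-type Paulis $X^{\mathbf c}$. The goal is to show $UPU^\dagger=D'X^{\mathbf c'}$ with $D'\in\mathcal D_n^{(\ell-1)}$.

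\emph{Step 1 (affine part).} Since $A\ket{\mathbf x}=\ket{L\mathbf x+\mathbf b}$ permutes the computational basis, $A X^{\mathbf c} A^\dagger = X^{L\mathbf c}$. Likewise $A Z^{\mathbf a}A^\dagger$ is diagonal with phase $(-1)^{\mathbf a\cdot L^{-1}(\mathbf x+\mathbf b)}$, which equals $\pm Z^{(L^{-1})^{T}\mathbf a}$. Hence $APA^\dagger$ is again a Pauli, of the form $\omega' Z^{\mathbf a'}X^{\mathbf c'}$ with $\mathbf c'=L\mathbf c$. Here $\omega' Z^{\mathbf a'}$ is a diagonal Pauli, so it lies in $\mathcal D_n^{(1)}\subseteq\mathcal D_n^{(\ell-1)}$ because $\ell\ge2$.

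\emph{Step 2 (diagonal part).} Using the commutator identity quoted before the lemma,
\[
D_f\,(\omega' Z^{\mathbf a'}X^{\mathbf c'})\,D_f^\dagger=\omega' Z^{\mathbf a'}\,D_fX^{\mathbf c'}D_f^\dagger=\omega' Z^{\mathbf a'}\,[D_f,X^{\mathbf c'}]\,X^{\mathbf c'} .
\]
The first equality holds because $D_f$ commutes with diagonal operators. By \cite{cui2017diagonal}, $[D_f,X^{\mathbf c'}]\in\mathcal D_n^{(\ell-1)}$. Explicitly, this commutator is the diagonal gate with phase polynomial $f(\mathbf x)-f(\mathbf x+\mathbf c')$, a discrete derivative that lowers the level by one. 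Since $\mathcal D_n^{(\ell-1)}$ is an abelian group containing the diagonal Paulis, the product $\omega' Z^{\mathbf a'}[D_f,X^{\mathbf c'}]$ lies in $\mathcal D_n^{(\ell-1)}$. Therefore $UPU^\dagger\in\mathcal D_n^{(\ell-1)}\langle X\rangle_n$.

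\emph{Step 3 (group structure).} It remains to justify the semidirect-product notation, i.e., that $\mathcal{XD}_n^{(\ell-1)}$ is a group. The $X$-Paulis permute basis states, so conjugating a diagonal gate by $X^{\mathbf c}$ maps $\mathcal D_n^{(\ell-1)}$ to itself (the phase polynomial becomes $f(\mathbf x+\mathbf c)$, of the same degree and precision). The two subgroups also intersect trivially, since the only diagonal element of $\langle X\rangle_n$ is the identity.

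The main obstacle is the level-lowering claim $[D,X^{\mathbf c}]\in\mathcal D^{(\ell-1)}_n$. I will invoke it from \cite{cui2017diagonal}, as the text preceding the lemma does. If a self-contained check is wanted, the plan is as follows. Expand $f$ in the monomial basis $2^{q-|S|}\cdot\prod_{i\in S}x_i$ used to define levels, and note that the shift $\mathbf x\mapsto\mathbf x+\mathbf c$ sends $x_i\mapsto 1-x_i$ for $i$ in the support of $\mathbf c$. In the difference $f(\mathbf x)-f(\mathbf x+\mathbf c)$, the top-degree terms cancel. The sign flips produce factors of $2$ that reduce the effective precision, so every resulting term has level at most $\ell-1$. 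This is a routine bookkeeping induction on the degree of each monomial.
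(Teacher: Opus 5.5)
Your proof is correct and follows the paper's argument. Like the paper, you write $U=D_fA$, note that $APA^\dagger$ is a Pauli, and use $D_fP'D_f^\dagger=[D_f,P']P'$ with the level reduction $[D_f,P']\in\mathcal D_n^{(\ell-1)}$ from \cite{cui2017diagonal}; you make explicit the absorption of the $Z$-part of $P'$ into the diagonal factor, which the paper does implicitly. One small imprecision in your optional self-contained check: when an odd number of a monomial's variables are flipped, its top-degree term doubles rather than cancels in $f(\mathbf x)-f(\mathbf x+\mathbf c)$, and that factor of $2$ is what lowers the level, so the bookkeeping still goes through.
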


The proof, together with the group-theoretic details of the level reduction, is given in the Supplemental Material~\cite{supplemental}. A further reduction occurs when comparing two propagated errors.

\begin{lemma}[Commutators of propagated errors]
\label[lemma]{proposition: propagated error commutator}
Let $E_1,E_2\in\mathcal{XD}_n^{(\ell-1)}$ be two propagated Pauli errors of a level-$\ell$ affine--diagonal circuit. Then
\begin{equation}
    [E_1,E_2]\in\mathcal D_n^{(\ell-2)} .
    \label{equation: propagated error commutator}
\end{equation}
\end{lemma}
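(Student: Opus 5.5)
Write each element in the normal form $E_j = D_j X^{\mathbf a_j}$, where $D_j\in\mathcal D_n^{(\ell-1)}$ and $X^{\mathbf a_j}=\prod_i X_i^{a_{j,i}}$; this is possible by \cref{proposition: propagation} and the semidirect-product structure of $\mathcal{XD}_n^{(\ell-1)}$. The plan is to expand the group commutator $[E_1,E_2]=E_1E_2E_1^\dagger E_2^\dagger$ and move every Pauli-$X$ factor to one side, using the conjugation rule $X^{\mathbf a}D_f X^{\mathbf a}=D_{f(\cdot+\mathbf a)}$. The key fact we will need is the one quoted before \cref{proposition: propagation}: for $D\in\mathcal D_n^{(m)}$ and a Pauli $P$, the commutator $[D,P]$ lies in $\mathcal D_n^{(m-1)}$. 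In our setting $P=X^{\mathbf a}$, and $[D_f,X^{\mathbf a}]=D_{f-f(\cdot+\mathbf a)}$ is a discrete derivative of the phase polynomial.

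Concretely, the collection step goes as follows. Since $X$-strings commute with each other, we have
\begin{equation}
[E_1,E_2]=D_1\,(X^{\mathbf a_1}D_2X^{\mathbf a_1})\,(X^{\mathbf a_1+\mathbf a_2}D_1^\dagger X^{\mathbf a_1+\mathbf a_2})\,(X^{\mathbf a_2}D_2^\dagger X^{\mathbf a_2}) ,
\end{equation}
which is diagonal. Write $D_j=D_{f_j}$ and denote the shifted polynomial by $f^{\mathbf a}(\mathbf x)=f(\mathbf x+\mathbf a)$. The total phase polynomial is then
\begin{equation}
f_1-f_1^{\mathbf a_1+\mathbf a_2}+f_2^{\mathbf a_1}-f_2^{\mathbf a_2}
=\bigl(f_1-f_1^{\mathbf a_2}\bigr)^{\text{shifted}}\ \text{terms}\ldots
\end{equation}
The cleaner route is to regroup the product algebraically as
\begin{equation}
[E_1,E_2]=D_1[X^{\mathbf a_1},D_2]\,X^{\mathbf a_1}\bigl(D_2[X^{\mathbf a_2},D_1^\dagger]\cdots\bigr)\ldots
\end{equation}
Rather than push this algebra through, I would argue at the level of phase polynomials. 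Up to the shifts, the total phase equals
\begin{equation}
\Delta_{\mathbf a_2}f_1-\Delta_{\mathbf a_1}f_2 ,
\end{equation}
where $\Delta_{\mathbf a}f=f-f(\cdot+\mathbf a)$. Each $\Delta_{\mathbf a}f_j$ is the phase polynomial of $[D_j,X^{\mathbf a}]\in\mathcal D_n^{(\ell-2)}$. Shifting a polynomial does not change its hierarchy level, because affine maps normalize $\mathcal D_n^{(m)}$. Finally, $\mathcal D_n^{(\ell-2)}$ is an abelian group, so the product of these pieces stays in $\mathcal D_n^{(\ell-2)}$.

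The main obstacle is getting the bookkeeping of this reduction exactly right. Naively the expression contains terms $f_1$ and $-f_1^{\mathbf a_1+\mathbf a_2}$, which differ by a shift by $\mathbf a_1+\mathbf a_2$; that looks like a single derivative $\Delta_{\mathbf a_1+\mathbf a_2}f_1$ rather than something pairing $f_1$ with $\mathbf a_2$ alone. The way to handle this is to use the identity
\begin{equation}
\Delta_{\mathbf a+\mathbf b}f=\Delta_{\mathbf a}f+(\Delta_{\mathbf b}f)^{\mathbf a} .
\end{equation}
Applying it to both $f_1$ and $f_2$ shows that the total phase is a signed sum of shifted first derivatives $\Delta_{\mathbf a_i}f_j$. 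Each of these lies in level $\ell-2$ by the hierarchy-reduction fact and the normalization by $\mathcal A_n$. Membership in $\mathcal D_n^{(\ell-2)}$ then follows because $\mathcal D_n^{(\ell-2)}$ is a group.

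At the edge case $\ell=2$, where $E_j$ are Paulis, the result is a phase in $\mathcal D_n^{(0)}$, the $\pm1$ scalars, which is consistent with the statement. For general $\ell$ I would also double-check that the cited reduction $[D,P]\in\mathcal D^{(m-1)}$ holds for a full $X$-string $P=X^{\mathbf a}$ and not only for single-qubit Paulis. This follows by writing $X^{\mathbf a}$ as a product of single-qubit $X$'s and telescoping with the same derivative identity.
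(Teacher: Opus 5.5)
Your strategy is the paper's: write $E_j=D_jX^{\mathbf a_j}$, reduce $[E_1,E_2]$ to commutators of diagonal gates with $X$-strings, then use the level reduction and closure of $\mathcal D_n^{(\ell-2)}$. The paper's proof is the single identity $[D_1X_1,D_2X_2]=[D_1,X_2][X_1,D_2]$.

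However, your displayed expansion is wrong. You have effectively used $E_j^\dagger=D_j^\dagger X^{\mathbf a_j}$, whereas in fact $E_j^\dagger=X^{\mathbf a_j}D_j^\dagger$. Multiplying out your product gives
\[
D_1X^{\mathbf a_1}D_2X^{\mathbf a_2}D_1^\dagger X^{\mathbf a_1}D_2^\dagger X^{\mathbf a_2}=E_1E_2\,(X^{\mathbf a_1}D_1)^{-1}(X^{\mathbf a_2}D_2)^{-1},
\]
which is not $[E_1,E_2]$. For instance, with $E_2=I$ it yields $D_{\Delta_{\mathbf a_1}f_1}$ instead of $I$; for $E_1=SX$ (a propagated error $TXT^\dagger$ up to phase) this is $-iZ$.

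So the closure argument, applied to your phase $f_1-f_1^{\mathbf a_1+\mathbf a_2}+f_2^{\mathbf a_1}-f_2^{\mathbf a_2}$, proves membership for the wrong operator. The ``main obstacle'' you describe is an artifact of this slip. The telescoping identity cannot turn that phase into $\Delta_{\mathbf a_2}f_1-\Delta_{\mathbf a_1}f_2$, because the two are different functions.

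The fix is immediate and removes the obstacle entirely. With the correct inverses,
\[
[E_1,E_2]=D_1\,(X^{\mathbf a_1}D_2X^{\mathbf a_1})\,(X^{\mathbf a_2}D_1^\dagger X^{\mathbf a_2})\,D_2^\dagger=[D_1,X^{\mathbf a_2}]\,[X^{\mathbf a_1},D_2].
\]
Its phase polynomial is exactly $\Delta_{\mathbf a_2}f_1-\Delta_{\mathbf a_1}f_2$, with no shifts. Each factor lies in $\mathcal D_n^{(\ell-2)}$ by the level reduction, which recovers the paper's proof.

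Your asserted final phase and your closure step are therefore right, but they must be derived from this expansion, not the displayed one. The identity $\Delta_{\mathbf a+\mathbf b}f=\Delta_{\mathbf a}f+(\Delta_{\mathbf b}f)^{\mathbf a}$ is only useful for extending the reduction from single-qubit $X$ to $X$-strings. The paper does not need it: it cites the reduction for arbitrary Paulis $P$, which also follows directly from the recursive definition of the hierarchy.
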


For the case considered in our simulation algorithm, $\ell=3$, the propagated errors are Clifford operators in $\mathcal{XD}_n^{(2)}$, while their pairwise commutators are $Z$-type Pauli operators in $\mathcal D_n^{(1)}$. This latter property is the key ingredient used below to efficiently sample the syndrome distribution.

\textit{Syndrome sampling}.---
For Clifford circuits with Pauli noise, stabilizer measurement outcomes are determined efficiently by propagating Pauli errors. In an affine--diagonal circuit, however, a Pauli fault generally propagates to a non-Pauli operator, so the syndrome distribution can depend on the logical state. Nevertheless, the restricted error structure of $\mathcal{AD}^{(3)}$ circuits makes this distribution particularly simple. For a fixed circuit-level Pauli error configuration, let $e^{\mathrm{ini}}$ denote the product of all errors propagated to the beginning of the circuit, and let $M^{\mathrm{ini}}=U^\dagger M U$ be a final stabilizer measurement propagated to the same point. Its expectation value is
\begin{align}
    \langle M\rangle
    =
    \Tr\!\left(M^{\mathrm{ini}}\,
    e^{\mathrm{ini}}\rho_{\mathrm{ini}}(e^{\mathrm{ini}})^\dagger\right)
    =
    \Tr\!\left(
    [(e^{\mathrm{ini}})^\dagger,M^{\mathrm{ini}}]\,
    \rho_{\mathrm{ini}}
    \right).
    \label{equation: syndrome expectation}
\end{align}
For $\mathcal{AD}^{(3)}$ circuits, the commutator is a $Z$ type Pauli string by \cref{proposition: propagated error commutator}. Hence, for a stabilizer input state, $\langle M\rangle\in\{0,\pm1\}$. The same expression remains efficiently evaluable for $\mathcal{AD}^{(4)}$ circuits, where the commutator is a Clifford circuit.

\begin{theorem}[Syndrome sampling]
\label{proposition: syndrome sampling}
Consider a stabilizer input state evolved by an
$\mathcal{AD}^{(\ell)}$ circuit and followed by commuting stabilizer
measurements. For a fixed configuration of circuit-level Pauli faults,
stabilizer expectation values can be evaluated efficiently for
$\ell\leq4$. For $\ell\leq3$, the complete syndrome can moreover be
sampled exactly in polynomial time using \cref{algorithm: syndrome sampling concise}.
\end{theorem}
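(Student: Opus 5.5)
We first reduce every measurement to a computation at the input, and then treat the two claims separately. Fix a fault configuration. Each Pauli fault occurs at some layer; we propagate it backwards through the preceding ideal affine--diagonal layers. By \cref{proposition: propagation}, applied to the inverse of the prefix circuit (which also lies in $\mathcal{AD}^{(\ell)}$), each propagated fault lies in $\mathcal{XD}^{(\ell-1)}_n$. Their product $e^{\mathrm{ini}}$ lies in the same group. Likewise, a Pauli stabilizer $M$ measured after the circuit pulls back to $M^{\mathrm{ini}}=U^\dagger MU\in\mathcal{XD}^{(\ell-1)}_n$. Then \eqref{equation: syndrome expectation} gives $\langle M\rangle=\Tr([(e^{\mathrm{ini}})^\dagger,M^{\mathrm{ini}}]\rho_{\mathrm{ini}})$. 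By \cref{proposition: propagated error commutator}, the commutator lies in $\mathcal D^{(\ell-2)}_n$.

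\textbf{Expectation values for $\ell\le 4$.} Here the commutator is a diagonal Clifford (at most level two), i.e.\ a product of $S$, $CZ$ and $Z$ gates. Its phase polynomial is quadratic and is computed explicitly from the affine--diagonal normal forms. Tracking $D_fA$ under conjugation only requires composing phase polynomials with affine maps, which is polynomial since degrees are bounded by $\ell$.

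For a stabilizer input $\rho_{\mathrm{ini}}=\ketbra{\psi}$, $\bra{\psi}C\ket{\psi}$ with $C$ Clifford is an overlap of two stabilizer states, $\ket\psi$ and $C\ket\psi$. This overlap is computable in $O(n^3)$ time by standard stabilizer-inner-product methods. For the special case $\ell\le3$, the commutator is a $Z$-type Pauli string, so the expectation is $0$ or $\pm1$ by a tableau check.

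\textbf{Exact sampling of the full syndrome for $\ell\le3$.} Let $M_1,\dots,M_m$ be the commuting measured checks. Their pullbacks to the input are $C_j=[(e^{\mathrm{ini}})^\dagger,M_j^{\mathrm{ini}}]$. The joint outcome distribution is that of measuring the operators $\tilde M_j=e^{\mathrm{ini}\dagger}M_j^{\mathrm{ini}}e^{\mathrm{ini}}$ on $\rho_{\mathrm{ini}}$. These still commute, square to one, and satisfy $\tilde M_j=C_j M_j^{\mathrm{ini}}$ up to conventions.

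The key point I would exploit is that we only need joint statistics of commuting projectors on a state. Consider the products $\prod_{j\in S}\tilde M_j$ for subsets $S$. Each pulls back in the same way to a commutator with a product of $M_j$'s, which is again a $Z$-string by \cref{proposition: propagated error commutator}. Conjugation by $(e^{\mathrm{ini}})^\dagger$ is a group homomorphism, so $S\mapsto[(e^{\mathrm{ini}})^\dagger,\prod_{j\in S}M^{\mathrm{ini}}_j]$ equals $\prod_{j\in S}C_j$ times the ideal-stabilizer value $\prod_{j\in S}M_j^{\mathrm{ini}}$. Assuming the ideal circuit outputs a codeword, the latter fixes the ideal eigenvalues, and on the code-relevant input these act as a known $\pm$ sign. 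Thus every correlator $\langle\prod_{j\in S}\tilde M_j\rangle$ equals $\langle\psi|\prod_{j\in S}C_j|\psi\rangle$, an expectation of a product of $Z$-type Paulis in a stabilizer state.

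The distribution whose Fourier coefficients are these correlators is exactly the outcome distribution of measuring the commuting Pauli strings $C_1,\dots,C_m$ on $\ket\psi$, with sign corrections. That distribution is sampled by sequential Gottesman--Knill measurement in $O(mn^2)$ time. This sequential measurement is what \cref{algorithm: syndrome sampling concise} implements. Sampling the faults themselves from the stochastic Pauli noise model is trivially polynomial.

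\textbf{Main obstacle.} The delicate step is justifying that the syndrome distribution is determined by these $Z$-string correlators in a multiplicative way. Concretely, one must show that $S\mapsto\prod_{j\in S}C_j$ is a homomorphism up to signs fixed by the ideal circuit. This requires that the extra commutators between propagated operators (level $\ell-2=1$, hence Pauli) commute with each other and do not obstruct factorization. I would verify this using the identity $[e,ab]=[e,a]\,a[e,b]a^{-1}$, where $a[e,b]a^{-1}$ for $a\in\mathcal{XD}^{(2)}$ and $[e,b]$ a $Z$-string is again a $Z$-string up to sign. Intermediate syndrome rounds are handled identically by truncating the circuit at each round.
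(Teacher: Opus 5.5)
Your reduction to the input and your treatment of expectation values for $\ell\le4$ are fine and agree with the paper. The gap is in the exact-sampling step for $\ell\le3$.

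Write $C_j=i^{p_j}Z^{\mathbf z_j}$, and let $\mathbf x_i$ be the $X$-part of $M_i^{\mathrm{ini}}$. Iterating your identity $[e,ab]=[e,a]\,a[e,b]a^{-1}$ gives $\langle\prod_{j\in S}\tilde M_j\rangle=\sigma(S)\,\langle\psi|\prod_{j\in S}C_j|\psi\rangle$, with $\sigma(S)=(-1)^{\sum_{i<j;\,i,j\in S}\mathbf x_i\cdot\mathbf z_j}$. This sign depends on the fault through $\mathbf z_j$, not only on the ideal circuit, and it is quadratic in $S$. So the correlators you obtain are not those of measuring the $C_j$. A quadratic sign modulation of Fourier coefficients cannot be absorbed into per-outcome sign flips without further argument, and you never specify the ``sign corrections''.

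The naive reading is in fact wrong. Take the ideal circuit $\mathrm{CCZ}_{123}\mathrm{CCZ}_{123}=I$ with an $X_1$ fault between the two gates, so $e^{\mathrm{ini}}=X_1\mathrm{CZ}_{23}$. Use input $|{+}\rangle^{\otimes4}$ and checks $M_1=X_2$, $M_2=X_3$, $M_3=X_2X_3X_4$. Then $C_1=Z_3$, $C_2=Z_2$ and $C_3=-Z_2Z_3$, so measuring the $C_j$ always returns $s_1+s_2+s_3=1$. But $M_1M_2M_3=X_4$ stabilizes the noisy output, so the true syndrome always has $s_1+s_2+s_3=0$.

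What is missing is the structural fact the paper's proof rests on. For a fixed fault configuration the noisy output is a pure state. Hence the labels $\mathbf r$ with $\langle M^{(\mathbf r)}\rangle=\pm1$ form a subspace $V$, and on $V$ the map $\mathbf r\mapsto\langle M^{(\mathbf r)}\rangle$ is a character, because the signed Pauli stabilizers of a pure state form a group. Off $V$, $\langle M^{(\mathbf r)}\rangle=0$ because every such commutator is a $Z$-string. Fourier inversion then yields the uniform distribution on $\mathbf s_0+V^\perp$.

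\cref{algorithm: syndrome sampling concise} computes exactly this: it finds $V=\ker B^T$ by elimination, evaluates the signs on a basis of $V$, solves for a particular $\mathbf s_0$, and samples uniformly over $\ker G$. Contrary to your claim, it is not a sequential measurement of the $C_j$.

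Your route can be repaired. On $V$, both your correlators and those of the Hermitian strings $Z^{\mathbf z_j}$ are characters with the same support, so they differ by a linear functional $\mathbf t$. Measuring $(-1)^{t_j}Z^{\mathbf z_j}$ sequentially would then give the right distribution. However, justifying and finding $\mathbf t$ requires precisely the character argument and the elimination over $V$ that your proposal omits.
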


Let the measured stabilizer group $\mathcal M=\langle M_1,\ldots,M_m\rangle$ and write
$M^{(\mathbf r)}=\prod_iM_i^{r_i}$ for $\mathbf r\in\mathbb Z_2^m$.
Given a fault configuration $e$, the labels $\mathbf{r}$ for which the syndrome is deterministic, i.e., $\langle M^{(\mathbf r)}\rangle=\pm1$ form a
linear subspace $V\subseteq\mathbb Z_2^m$. If $\mathbf s_0$ satisfies
$(-1)^{\mathbf r\cdot\mathbf s_0}=\langle M^{(\mathbf r)}\rangle$
for all $\mathbf r\in V$, the syndrome distribution is
\begin{equation}
    \Pr(\mathbf s|e)
    =
    \frac{1}{2^{m-\dim V}}\,
    \mathbf 1_{V^\perp}(\mathbf s+\mathbf s_0),
    \label{equation: affine syndrome distribution}
\end{equation}
where $\mathbf 1_{V^\perp}$ denotes the indicator of $V^\perp$.
Thus, for every sampled fault configuration, the syndrome is uniform on
the affine space $\mathbf s_0+V^\perp$. The subspace and shift
are obtained by binary Gaussian elimination, giving the
sampling procedure in \cref{algorithm: syndrome sampling concise}. The proof of the affine-subspace form and further implementation details are given in the Supplemental Material \cref{subsection: syndrome sampling}.

\begin{algorithm}[t]
\caption{Syndrome sampling for $\mathcal{AD}^{(3)}$ circuits}
\label{algorithm: syndrome sampling concise}
\KwIn{Fault trajectory $e$, measurements $\{M_i\}$, initial stabilizers $\{S_j\}$.}

Propagate $e$ and each $M_i$ to the input, obtaining
$e^{\mathrm{ini}}$ and $M_i^{\mathrm{ini}}$.

Set $K_i=[(e^{\mathrm{ini}})^\dagger,M_i^{\mathrm{ini}}]$ and
$B_{ij}=\langle K_i,S_j\rangle_{\mathrm{sp}}$.

Find a basis matrix $G$ of $V=\ker B^T$ by Gaussian elimination.

For each row $\mathbf v_i$ of $G$, compute $\mathbf{a}$ where
$(-1)^{a_i}=\langle M^{(\mathbf v_i)}\rangle$.

Solve $G\mathbf s_0=\mathbf a$ and find a basis $N$ of $\ker G$.

Sample $\mathbf z$ uniformly and return
$\mathbf s=\mathbf s_0+N\mathbf z$.
\end{algorithm}

\textit{Logical fidelity}.---
Together with syndrome sampling, the Clifford structure of
propagated errors allows exact and efficient estimation of the output
fidelity after Pauli recovery or postselection. We include the data and all stabilizer state ancillas in a
single system and let $\Pi_{0}^{\mathrm{in}}$ and $\Pi_{0}^{\mathrm{out}}$ denote the
stabilizer projectors defining the ideal input and output subspaces. 
Neither the data code nor the ancilla stabilizer state needs to be identical
at the two boundaries. We require only that the ideal circuit $U$ maps one
reference subspace to the other,
\begin{equation}
    U\Pi_{0}^{\mathrm{in}}U^\dagger=\Pi_{0}^{\mathrm{out}} .
    \label{equation: projector intertwining}
\end{equation}
This condition includes logical gates, code deformations, and ideal syndrome
extraction circuits.

Recovery or postselection may use incomplete syndrome information. For example, a transversal final measurement may reveal only one type of CSS syndrome, or one may postselect on only a subset of stabilizer checks. Let $\mathcal M$ denote the measured stabilizer subgroup and
$\Pi_{\mathbf{s},\mathcal M}^{\mathrm{out}}$ the projector onto its syndrome $\mathbf{s}$
sector. A Pauli recovery $P_{\mathbf{s}}$ that maps this sector to the $+1$ sector satisfies $P_{\mathbf{s}}\Pi_{\mathbf{s},\mathcal M}^{\mathrm{out}}
=\Pi_{0,\mathcal M}^{\mathrm{out}}P_{\mathbf{s}}$. For fidelity evaluation, we supplement the measured syndrome
with virtual measurements of a complete set of stabilizers
of the ideal ancilla state. We include these outcomes in
$\mathbf{s}$ and the corresponding ancilla recovery in
$P_{\mathbf{s}}$, thereby restoring the ancillas to their
ideal state.

For a sampled circuit-level Pauli fault configuration $e$, let $C_e$ be the
corresponding Clifford error propagated to the input and define $C_{\mathbf{s}}=U^{\dagger}P_{\mathbf{s}}U$.  The joint-state overlap then gives the output fidelity of
the data blocks after the chosen recovery or postselection:
\begin{equation}
\begin{split}
    F_{\mathrm{out}}
    &=
    \sum_{e,\mathbf{s}} p_e
    \left|
    \bra{\overline{\phi}}
    P_{\mathbf{s}}\Pi_{\mathbf{s},\mathcal M}^{\mathrm{out}}
    UC_e
    \ket{\overline{\psi}}
    \right|^2                                      \\
    &=
    \sum_{e,\mathbf{s}} p_e
    \left|
    \bra{\overline{\psi}}
    C_{\mathbf{s}}C_e
    \ket{\overline{\psi}}
    \right|^2 .
\end{split}
\label{equation: output fidelity}
\end{equation}
Since the
ideal encoded data plus ancilla output $\ket{\overline{\phi}}=U\ket{\overline{\psi}}$
is stabilized by $\mathcal{M}$, the projector $\Pi_{0,\mathcal M}^{\mathrm{out}}$ is absorbed directly into the
target bra vector.
For $U\in\mathcal{AD}^{(3)}$, both $C_{\mathbf{s}}$ and $C_e$ are
Clifford operators, so every term is an efficiently computable stabilizer-state
overlap.  The postselection is handled by rejecting the branch from the syndrome sampling procedure.

To obtain a meaningful notion of logical fidelity, any syndrome unavailable from the
implemented circuit can be supplied by perfect virtual syndrome
measurements, followed by Pauli recovery to the code
space~\cite{hangleiter2025fault}.

Note that the method extends to circuits with feedforward logical operations such as state and gate teleportation by rewriting the original circuit through a unitary representation. See \cref{sec:teleportation} for details.

\textit{Observable expectation}.---
The same structure also allows observables to be evaluated after syndrome
projection and recovery. Let $\Pi_{\mathbf s}^{\mathrm{out}}$ be a complete output syndrome projector,
including the data and ancilla stabilizers, and let $Q_{\mathbf s}$ be a Pauli
correction satisfying
$Q_{\mathbf s}\Pi_{\mathbf s}^{\mathrm{out}}Q_{\mathbf s}^\dagger
=\Pi_{0}^{\mathrm{out}}$. Let $\ket{\Omega}$ be the initial stabilizer state.
For a fault trajectory $e$ with probability $p_e$, let $C_e$ denote the
corresponding Clifford error propagated to the input. Defining
$C_{\mathbf s}=U^\dagger Q_{\mathbf s}U$ and
$O_{\mathrm{in}}=U^\dagger O U$, the expectation value after correction,
averaged over fault trajectories and syndrome outcomes, is
\begin{equation}
    \langle O\rangle_{\mathrm{out}}
    =\sum_{e,\mathbf s}p_e
    \langle\Omega|
    C_e^\dagger C_{\mathbf s}^\dagger
    \Pi_{0}^{\mathrm{in}} O_{\mathrm{in}}\Pi_{0}^{\mathrm{in}}
    C_{\mathbf s}C_e
    |\Omega\rangle,
    \label{equation: corrected observable}
\end{equation}
if $U\in\mathcal{AD}^{(3)}_n$ satisfies
\cref{equation: projector intertwining}.
For $O\in\mathcal{XD}^{(2)}_n$, the operators
$C_{\mathbf s}$, $C_e$, and $O_{\mathrm{in}}$ are Clifford,
and the normalized projection
$\Pi_0^{\mathrm{in}}C_{\mathbf s}C_e\ket{\Omega}$
is a stabilizer state. Its conditional expectation of
$O_{\mathrm{in}}$ is therefore efficiently computable.
Averaging over sampled fault trajectories and complete
syndromes gives an unbiased estimator with polynomial
cost per sample. For postselection, we average over accepted
samples. See \cref{proposition: observable expectation}
in the Supplemental Material for the derivation.

\begin{figure}[t]
  \centering
  \includegraphics[width=\columnwidth]{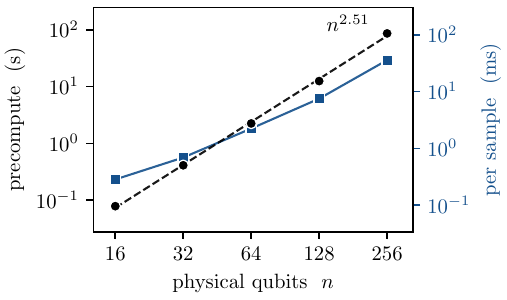}
\caption{Cost of the fidelity estimation pipeline on one CPU versus the number of
    physical qubits $n$, for the hypercube-IQP circuits using the $[[8,3,2]]$ code \cite{hangleiter2025fault}.
    Black circles: one-time precomputation (simulator construction, back-propagation
    of faults and measured stabilizers, and the fault--measurement commutator
    tables), with a fitted power law $n^{2.51}$ (dashed). Blue squares: cost per
    Monte-Carlo sample, obtained by subtracting the precomputation from the total run
    time and dividing by the number of samples. No power law is quoted for the latter
    because its local slope still rises across this range ($1.3 \to 2.3$). See
    \cref{subsection: computational cost} for the scaling of the individual
    subroutines.}

  \label{fig:runtime}
\end{figure}

\textit{Example 1 hypercube IQP sampling}.---
A degree-$D$ IQP sampling task prepares states in the X basis, uses random gates from $\mathrm{C^{k}Z}$ with $k=1,..., D-1$, and finally measures in the X basis \cite{shepherd2008instantaneous,shepherd2010quantum,bremner2011classical}. Hypercube-IQP circuits are encoded versions of IQP circuits augmented by CNOT gates, co-designed with the hypercube code
family $[[2^D,D,2]]$~\cite{hangleiter2025fault}.
These circuits have been experimentally
demonstrated~\cite{bluvstein2024logical}. These circuits have been shown to be classically hard to sample from under a widely believed complexity-theoretic conjecture \cite{hangleiter2025fault,bremner2016average,bremner2017achieving}. A logical hIQP circuit with hypercube dimension $\mathfrak{D}$ encodes $D2^{\mathfrak{D}}$ qubits using $b=2^{\mathfrak{D}}$ blocks of the $[[2^D,D,2]]$ code. By placing code blocks on the vertices of a $\mathfrak{D}$-dimensional hypercube, each hyperlayer of a circuit with hyperdepth $\mathfrak{L}$ consists of random in-block IQP and CNOT gates, as well as transversal CNOT gates between all vertices connected by an edge of the hypercube.

In the $D=3$ example, the non-Clifford CCZ gate is encoded using eight physical $\mathrm{T}$ or $\mathrm{T}^{\dagger}$ gates transversally on the $[[8,3,2]]$ code. Similarly, logical Z and CZ gates are implemented using $\mathrm{Z}$ and $\mathrm{S/S^{\dagger}}$ physical gates. Therefore, the physical circuits are also affine-diagonal circuits with $\ell=3$ and the fidelity can be simulated efficiently. In \cref{fig:runtime}, we show the computational cost of the pre-computation and logical fidelity calculation of each sample of Pauli fault configuration, for $\mathfrak{D}=1,...,5$. In particular, we show that the runtime scales polynomially with the number of physical qubits and circuit depth. See details in \cref{subsection: computational cost}.

\begin{figure}[t]
  \centering
  \includegraphics[width=\columnwidth]{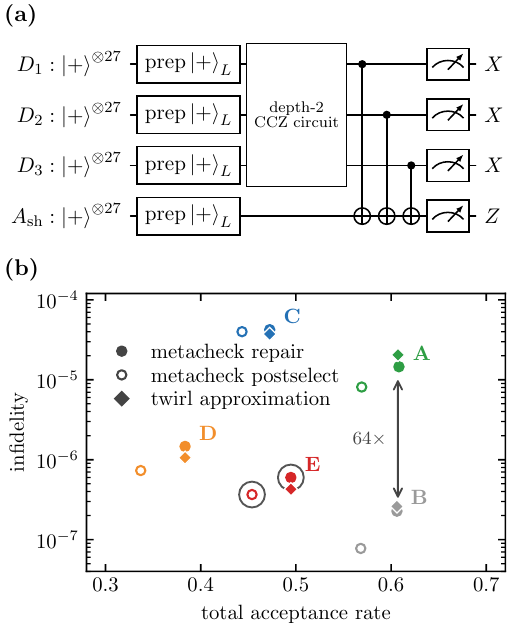}
  \caption{(a) Protocol E: three [[27,3,3]] data blocks are prepared in $|+\rangle_L$, a depth-2 transversal CCZ circuit builds the encoded hypergraph state, and one shared ancilla block $A_{\rm sh}$ extracts the post-CCZ $Z$ syndrome. (b) Infidelity of the 9-qubit hypergraph state versus total acceptance rate for five treatments of the Z syndrome — A: none. B: noiseless (lower bound). C: bare ancillas. D: one Steane block per data block. E: one shared block, circled, as drawn in (a) — with the prep metachecks repaired (filled circle) or postselected (hollow). The diamond data points correspond to results with the approximation using twirling \cite{menon2026magic}. The double arrow marks the A–B gap. Simulations were performed with two-qubit depolarizing noise at a total error rate of $10^{-3}$ and CCZ gate noise at a total error rate of $2\times10^{-3}$ with a $10{:}1$ bias toward $Z$-type errors, following the noise model of Ref.~\cite{menon2026magic}.}
  \label{fig:tricycle}
\end{figure}

\textit{Example 2: Tricycle-code magic generation}.---
Tricycle codes are CSS codes defined by block parity-check matrices constructed from sums of permutation matrices, extending the group-algebra structure of bivariate bicycle codes to three dimensions
\cite{panteleev2021degenerate,bravyi2024high}. Formally, they arise from three-dimensional balanced products of classical linear codes over Abelian group algebras. Suitable instances admit constant-depth physical $\mathrm{CCZ}$ circuits that preserve the code space and implement networks of
logical $\mathrm{CCZ}$ gates
\cite{lin2024transversal,menon2026magic,li2025transversal}. Such circuits, when combined with syndrome-based postselection and single-shot property \cite{bombin2015single} in the appropriate basis, make tricycle codes promising for fault-tolerant preparation of $\mathrm{CCZ}$-type
magic states. Existing numerical studies rely on simplified treatments of syndrome sampling and the correlated errors propagated by the physical $\mathrm{CCZ}$
circuit, motivating a direct circuit-level simulation of the full non-Clifford protocol \cite{menon2026magic,li2025transversal}.

Here we explore different syndrome extraction and postselection strategies and simulate the logical fidelity of the final logical state. All methods start from physical product states and perform a single round of syndrome extraction, leveraging the single-shot $\ket{\overline{+}}$ state preparation property. We perform postselection on the X syndrome and perform either correction or postselection on the Z syndrome based on the Z meta-checks. Pauli X correction is then applied. This is followed by finite-depth CCZ gates. Finally, we postselect on trivial $X$ syndrome and compare
different strategies for additional $Z$-syndrome extraction: ideal extraction, bare-syndrome extraction, Steane-style extraction \cite{steane1997active}, and a shared-block extraction. We add an additional fictitious $ Z$ syndrome extraction at the end and correct the postselected state back to the code space to calculate logical fidelity. For all decoding, we use the BP+OSD decoder \cite{roffe2020decoding}. The additional $Z$ checks address fault-propagation mechanisms
identified in Ref.~\cite{menon2026factories}, targeting faults that remain undetected by $X$-syndrome postselection, as analyzed in
\cref{sec:ccz-single-faults}. In \cref{fig:tricycle}, we show the logical fidelities and acceptance fractions for various Z syndrome extraction methods, and compare our exact simulation with an approximation
that replaces CZ errors arising from fault propagation
through the CCZ circuit by their Pauli-twirled
channels~\cite{menon2026magic}, applied to the initial
nine-logical-qubit state $\ket{\overline{+}}^{\otimes 9}$. We observe no large deviation between the exact and approximate results here. See \cref{sec:simulation-details} for the details of the simulation and \cref{sec:syndrome-protocols} for each protocol (A-E).

\textit{Example 3: Cultivating magic states}.---
An alternative approach to improving magic-state fidelity is to measure the relevant logical Clifford observable and postselect on the desired eigenvalue.
This approach underlies magic-state cultivation and related verification
schemes. The logical magic state $\ket{\overline{T}}=\overline{T}\ket{\overline{+}}$ can be cultivated by measuring the logical checks $\overline{H}_{XY}=(\overline{X}+\overline{Y})/\sqrt2$ using CNOTs and third-level diagonal gates.
Examples directly compatible with our simulator include the double-checking circuit
in color codes~\cite{gidney2024cultivation}, the regular and $\mathbb{RP}^2$ surface codes~\cite{sahay2026foldtransversal,claes2025twoqubit,vaknin2026highrate,chen2026rp2}, and similar phase kickback verification subroutines
adopted in Refs.~\cite{lee2025colorcodes,hirano2025latticesurgery}.

Cultivation of $\ket{\overline{H}}$ state in the Steane
code~\cite{itogawa2025zerolevel,hirano2024leveraging,surti2026efficient}
instead measures $\overline H$ using physical controlled-Hadamard gates.
Define
\begin{equation}
    K=SH=\frac{1}{\sqrt2}
    \begin{pmatrix}1&1\\ i&-i\end{pmatrix}.
\end{equation}
For ancilla control $a$ and data target $d$, the Clifford frame change on data qubits gives
$(I\otimes K^\dagger)\,\mathrm{C}H_{a,d}\,(I\otimes K)
=T_a^\dagger\mathrm{C}S_{a,d}\mathrm{CNOT}_{a\to d}
\in\mathcal{AD}^{(3)}$.
The transformed ideal preparation is
$\overline T^\dagger\ket{\overline+}$, which can be prepared by applying
a third-level diagonal gate to the stabilizer state $\ket{\overline+}$.
Pauli errors stay as Pauli errors in the new frame.
Together with deferred readout and virtual ancilla recovery, these
representations permit our syndrome-sampling and fidelity-estimation
methods under circuit-level Pauli noise. An interesting direction is to extend this analysis to
complete magic-state preparation
protocols~\cite{gidney2024cultivation,hetenyi2026gauging,
rosenfeld2025experiment}.

\textit{Discussion and Outlook}.---
We have shown that affine--diagonal circuits permit exact and efficient
classical benchmarking of a broad class of fault-tolerant non-Clifford
circuits under circuit-level stochastic Pauli noise. We find that simulating all measurements is not necessary other than the syndrome information, even in the presence of logical feedforward operations. This demonstrates that an extensive number of non-Clifford gates does not necessarily preclude efficient fault-tolerant simulation for benchmarking purposes, when their error propagation possesses sufficient algebraic structure. Natural extensions include incorporating noise processes beyond Pauli
channels, such as atom loss and coherent errors, and benchmarking
magic-state factories based on higher-distance qLDPC codes. Finally, extending our fidelity estimators to affine--diagonal blocks interleaved with Clifford operations could provide a route to benchmarking circuits capable of universal quantum computation.

\emph{Note added:} While finalizing this manuscript, we became aware of independent works developing efficient methods for benchmarking
noisy fault-tolerant non-Clifford blocks and magic-state preparation
protocols~\cite{de2026efficiently,takada2026exact}.
Our approach relies on backpropagating errors through affine--diagonal
circuits and also provides logical-observable estimation and,
separately, stabilizer-check expectation estimators for fourth-level
diagonal gates.

\begin{acknowledgments}
We thank Daniel Gottesman, Yijia Xu, Frank Zhang, Rafael Haenel, Samuel Tan, Yifan Hong, Christine Li, Jonathan Wurtz for helpful discussions. This research was supported in part by NSF QLCI grant OMA-2120757, NSF QLCI grant OMA-2553574, NSF NQVL:Design:ORAQL grant 2533041 and ARO grant W911NF-23-1-0258.
DH was supported by a Simons postdoctoral fellowship through DOE QSA and NSF QLCI Grant No.\ 2016245, and by the Swiss National Science Foundation through Ambizione Grant No.\ 223764.
Large language models were used to assist with numerical simulations and to improve the clarity and grammar of the manuscript.
\end{acknowledgments}

\nocite{aaronson2004improved,gidney2021stim,gottesman1999quantum,zeng2008semi,cross2016scalable,amy2016finite,cui2017diagonal,hangleiter2025fault}

\putbib[references]
\end{bibunit}

\clearpage
\onecolumngrid
\begin{bibunit}
\bibliographynamespace{supplement}
\begin{center}
    \textbf{\large Supplemental Material for\\
    \papertitle}

    \vspace{0.6em}

    Xiao Xiao$^{1,2}$, Dominik Hangleiter$^{3,4}$, J. Pablo Bonilla Ataides$^{5}$,
    Rohan Mehta$^{5}$, Varun Menon$^{5}$, Mikhail D. Lukin$^{5}$,
    and Michael J. Gullans$^{1,2}$

    \vspace{0.4em}

    {\itshape
    $^{1}$QuEra Computing Inc., 1380 Soldiers Field Road, Boston, MA 02135, USA\\
    $^{2}$Joint Center for Quantum Information and Computer Science,
    University of Maryland, College Park, Maryland 20742, USA.\\
    $^{3}$Simons Institute for the Theory of Computing,
    University of California at Berkeley, Berkeley, California 94270, USA.\\
    $^{4}$Institute for Theoretical Physics, ETH Z\"urich, 8093 Z\"urich, Switzerland.\\
    $^{5}$Department of Physics, Harvard University, Cambridge, Massachusetts 02138, USA.}
\end{center}

\setcounter{section}{0}
\setcounter{equation}{0}
\setcounter{figure}{0}
\setcounter{page}{1}
\setcounter{secnumdepth}{3}                                      
\renewcommand{\thesection}{S\arabic{section}}                   
\renewcommand{\thesubsection}{\thesection.\arabic{subsection}}  
\makeatletter
\renewcommand{\p@subsection}{}
\makeatother
\renewcommand{\theequation}{S\arabic{equation}}                 
\renewcommand{\thefigure}{S\arabic{figure}}                     

\tableofcontents          

\section{Preliminary\label{section: preliminary}}

%
%
%
%

We collect the notation and background used throughout. A calligraphic
symbol denotes a group or set and the corresponding roman symbol an element
of it. Computational-basis states are labelled by bit strings
$\mathbf{x}=(x_1,\dots,x_n)\in\mathbb{Z}_2^n$, we write $[n]=\{1,\dots,n\}$, and
$\mathrm{wt}(\mathbf{a})$ denotes the Hamming weight of
$\mathbf{a}\in\mathbb{Z}_2^n$.

\subsection{Pauli group and stabilizer codes}
\label{subsection: stabilizer codes}
We define the $n$-qubit Pauli group as
$\mathcal{P}_n=\{1,i,-1,-i\}\cdot\{I, X, Y, Z\}^{\otimes n}$, generated by tensor
products of single-qubit Pauli operators. Any two elements $e,f\in\mathcal{P}_n$
either commute or anticommute, and we write the binary commutator
$\langle e,f\rangle\in\{0,1\}$ where $\langle e,f\rangle=0$ iff $e$ and $f$
commute. A stabilizer code is specified by the stabilizer, an abelian subgroup
$\mathcal{S}\leq\mathcal{P}_n$ with $-I\notin\mathcal{S}$, generated by $n-k$
independent stabilizer generators $\mathcal{S}=\langle S_1,\dots,S_{n-k}\rangle$.
Its code space, which encodes the Hilbert space of $k$ logical qubits, is the
simultaneous $+1$ eigenspace of $\mathcal{S}$, i.e.\ the image of the
zero-syndrome projector
$\Pi_{0}=\tfrac{1}{|\mathcal{S}|}\sum_{S\in\mathcal{S}}S$. A logical operation is
a unitary $\overline{U}$ that preserves the code space,
$\overline{U}\,\Pi_{0}\,\overline{U}^{\dagger}=\Pi_{0}$.

Error correction proceeds by measuring a generating set
$\{M_1,\dots,M_m\}$ of (a subgroup of) $\mathcal{S}$. In the absence of noise, the
outcomes are fixed by the code. A Pauli error $e\in\mathcal{P}_n$ flips the $i$th
outcome exactly when it anticommutes with $M_i$, producing the \emph{syndrome}
$\mathbf{s}=(s_1,\dots,s_m)\in\mathbb{Z}_2^m$ with $s_i=\langle e,M_i\rangle$. An
error is \emph{detectable} if it anticommutes with some measured generator.
A decoder maps an observed syndrome $\mathbf{s}$ to a correction
$r_{\mathbf{s}}$, and the associated recovery map $\mathcal{R}$ returns the state
to the code space. A logical failure occurs when the residual $r_{\mathbf{s}}e$
is a nontrivial logical operator.

Throughout, we model noise as circuit-level stochastic Pauli noise where we apply Pauli errors sampled from a probability
distribution. For Clifford circuits both the error propagation and the
syndrome sampling are efficient in the stabilizer/tableau
formalism~\cite{aaronson2004improved,gidney2021stim}. The central difficulty
addressed in this work is to retain efficiency in the presence of the
non-Clifford diagonal gates introduced below. As a figure of merit we use the
logical fidelity of the recovered state relative to the ideal encoded output,
\begin{equation}
    \overline{F}=\Tr\!\bigl[\mathcal{R}(\rho)\,\overline{\sigma}\bigr],
    \label{equation: logical fidelity def}
\end{equation}
where $\rho$ is the noisy output state of the encoded circuit, $\mathcal{R}$ the
recovery (syndrome measurement and correction) applied to it, and
$\overline{\sigma}=\overline{U}\ket{\overline{\psi}}\!\bra{\overline{\psi}}\overline{U}^{\dagger}$
the ideal logical output.

\subsection{Clifford hierarchy}

The Clifford hierarchy is a collection of sets of operators $\{\mathcal{C}^{(\ell)}_n\}_{\ell}$. The first level of the Clifford hierarchy is the Pauli group with an
arbitrary global phase,
$\mathcal{C}^{(1)}_n=\{\mathrm{e}^{\mathrm{i}\phi}\}\cdot\mathcal{P}_n$, and the higher
levels are defined recursively by~\cite{gottesman1999quantum}
\begin{equation}
    \mathcal{C}^{(\ell)}_n=\bigl\{\,U:\ U e U^{\dagger}\in\mathcal{C}^{(\ell-1)}_n,\
    \forall e\in\mathcal{P}_n\,\bigr\}.
    \label{equation: clifford hierarchy}
\end{equation}
The second level $\mathcal{C}^{(2)}_n$ is the Clifford group, i.e.\ the
normalizer of $\mathcal{P}_n$. It is generated by $\{H,S,\mathrm{CNOT}\}$ and maps
Pauli operators to Pauli operators. For $\ell\geq 3$ the level
$\mathcal{C}^{(\ell)}_n$ is not a group ~\cite{zeng2008semi}. Gates first
appearing at the third level include the
$T=\mathrm{diag}(1,\mathrm{e}^{\mathrm{i}\pi/4})$ gate and the
controlled--controlled-$Z$ gate $\mathrm{CCZ}$, and the Clifford group augmented
by any single gate from $\mathcal{C}^{(3)}_n$ is
universal~\cite{gottesman1999quantum}. The circuit family studied here is the generalized
CNOT--dihedral family generated by $X$, $\mathrm{CNOT}$, and diagonal hierarchy
gates~\cite{cross2016scalable,amy2016finite}, which we introduce next.

\subsection{Diagonal gates in the Clifford hierarchy}
\label{subsection: diagonal gates}
We denote by $\mathcal{D}^{(\ell)}_n$ the subset of gates in
$\mathcal{C}^{(\ell)}_n$ that are diagonal in the computational basis. Although
the full levels in the Clifford hierarchy are not groups for $\ell\geq 3$, $\mathcal{D}^{(\ell)}_n$ is an
abelian group for every $\ell\in\mathbb{N}^{+}$~\cite{zeng2008semi,cui2017diagonal}.
We adopt the conventions $\mathcal{D}^{(0)}_n=\{I_n\}$, and $\mathcal{D}^{(1)}_n$
is the group of diagonal ($Z$-type) Paulis.
A diagonal gate of \emph{precision} $q$ (phases that are integer multiples of
$2\pi/2^{q}$) is fixed by a function $f:\mathbb{Z}_2^n\to\mathbb{Z}_{2^{q}}$
through
\begin{equation}
    U=\sum_{\mathbf{x}\in\mathbb{Z}_2^n}
    \exp\!\Bigl[\tfrac{2\pi\mathrm{i}}{2^{q}}\,f(\mathbf{x})\Bigr]
    \ket{\mathbf{x}}\!\bra{\mathbf{x}}.
    \label{equation: diagonal phase function}
\end{equation}
Since $x_j^2=x_j$ for $x_j\in\{0,1\}$, the function $f$ admits a unique
representation as a \emph{phase polynomial} with coefficients
$c_{\mathbf{a}}\in\mathbb{Z}_{2^{q}}$,
\begin{equation}
    f(\mathbf{x})=\sum_{\mathbf{a}\in\mathbb{Z}_2^n}
    c_{\mathbf{a}}\prod_{j=1}^{n}x_j^{a_j} \pmod{2^{q}}.
    \label{equation: phase polynomial}
\end{equation}
Any diagonal gate of finite precision factorizes into the generating gates
$U_{q,\mathbf{a}}$, each labeled by a monomial $\prod_{j=1}^{n}x_j^{a_j}$ with
$\mathbf{a}\in\mathbb{Z}_2^n$,
\begin{equation}
    U_{q,\mathbf{a}}=\sum_{\mathbf{x}\in\mathbb{Z}_2^n}
    \exp\!\Bigl[\tfrac{2\pi\mathrm{i}}{2^{q}}\prod_{j=1}^{n}x_j^{a_j}\Bigr]
    \ket{\mathbf{x}}\!\bra{\mathbf{x}}.
    \label{equation: monomial diagonal gate}
\end{equation}
The hierarchy level of a generating gate $U_{q,\mathbf{a}}$ is fixed jointly by
its precision $q$ and degree $\mathrm{wt}(a)$. With
\begin{equation}
    \mathcal{W}_{\ell}=\bigl\{(q,\mathbf{a})\ :\ (q-1)+\mathrm{wt}(\mathbf{a})=\ell\bigr\},
    \label{equation: W_ell}
\end{equation}
the diagonal group $\mathcal{D}^{(\ell)}_n$ is generated by all such gates up to
level $\ell$ together with the global phase~\cite{cui2017diagonal},
\begin{equation}
    \mathcal{D}^{(\ell)}_n=\langle\mathrm{e}^{\mathrm{i}\phi}\rangle\cdot
    \bigl\langle\,\mathcal{U}_{\leq \ell}\,\bigr\rangle,\qquad
    \mathcal{U}_{\leq \ell}=\bigcup_{\ell'\leq \ell}
    \{\,U_{q,\mathbf{a}} : (q,\mathbf{a})\in\mathcal{W}_{\ell'}\,\}.
    \label{equation: diagonal generating set}
\end{equation}
This reproduces the standard assignments (see also \cref{fig: clifford hierarchy}):
\begin{equation}
    \renewcommand{\arraystretch}{1.3}
    \begin{array}{l@{\quad}c@{\quad}c@{\quad}c}
    \text{gate} & \text{phase }2\pi f/2^{q} & (q,\mathrm{wt}) & \text{level}\\[-0.1em]
    \hline
    Z    & \tfrac{2\pi}{2}\,x_1        & (1,1) & 1\\
    S    & \tfrac{2\pi}{4}\,x_1        & (2,1) & 2\\
    T    & \tfrac{2\pi}{8}\,x_1        & (3,1) & 3\\
    \mathrm{CZ}  & \tfrac{2\pi}{2}\,x_1x_2   & (1,2) & 2\\
    \mathrm{CS}  & \tfrac{2\pi}{4}\,x_1x_2   & (2,2) & 3\\
    \mathrm{CCZ} & \tfrac{2\pi}{2}\,x_1x_2x_3& (1,3) & 3
    \end{array}
    \nonumber
\end{equation}
so that, for example, $\mathrm{CCZ}$ and $T$ are both third level despite their
differing precisions: the level is set by precision \emph{and} degree, not by
precision alone. The circuits of interest are generated by the gate set
$\{X,\mathrm{CNOT},\mathcal{D}^{(\ell)}_n\}$, with $\ell=3$ in our concrete
algorithm. Their group structure and the resulting Pauli-error propagation are
analyzed in \cref{section: IQP propagation}.

\begin{figure}
    \centering
    \includegraphics[width=0.35\linewidth]{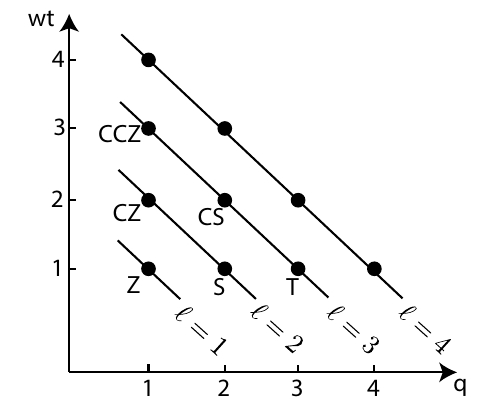}
    \caption{Generating gatesets for the group of diagonal gates in the $\ell$th level of the Clifford hierarchy. The x-axis and y-axis label the precision $q$ and weight $wt$ of the monomial of \cref{equation: monomial diagonal gate}.}
    \label{fig: clifford hierarchy}
\end{figure}

\subsection{Semidirect products and circuit normal form}
\label{subsection:semidirect-product}
A group \(G\) is said to be the internal semidirect product of a normal subgroup \(N \trianglelefteq G\) and a subgroup \(H \leq G\), denoted by \(G = N \rtimes H\), if \begin{equation} G = NH \qquad \text{and} \qquad N \cap H = \{e\}. \end{equation}
Every element \(g \in G\) then admits a unique factorization
\begin{equation}
    g = nh,
    \qquad
    n \in N,
    \quad
    h \in H.
\end{equation}
The normality of \(N\) ensures that conjugation by elements of \(H\)
preserves \(N\), namely,
\begin{equation}
    hnh^{-1} \in N
\end{equation}
for every \(n \in N\) and \(h \in H\).
When \(G\) is a group of unitary operators generated by two sets of gates,
this factorization gives a circuit normal form. In particular, any unitary
\(U \in G\) can be written as
\begin{equation}
    U = U_N U_H,
    \qquad
    U_N \in N,
    \quad
    U_H \in H,
\end{equation}
where \(U_N\) and \(U_H\) denote subcircuits generated by the respective gate
sets. Since operators act on states from right to left, this factorization
corresponds to applying the \(H\) subcircuit first, followed by the \(N\)
subcircuit:
\begin{equation}
    \ket{\psi}
    \longmapsto
    U_H \ket{\psi}
    \longmapsto
    U_N U_H \ket{\psi}.
\end{equation}
In this work, the normal subgroup \(N\) will
be the diagonal gates \(\mathcal{D}_n^{(\ell)}\). The resulting normal form
underlies the subcircuit representation used below to prove the properties of
Pauli error propagation and their commutation relations.

\section{Propagation through affine--diagonal circuits}
\label{section: IQP propagation}

The affine--diagonal circuits are the main subcircuits of interest in this work.
They are generated by the gate set $\{X,\mathrm{CNOT},\mathcal{D}^{(\ell)}_n\}$,
combining the Pauli $X$ gate, the $\mathrm{CNOT}$ gate, and the diagonal
Clifford-hierarchy gates $\mathcal{D}^{(\ell)}_n$ of
\cref{subsection: diagonal gates}. They generalize the CNOT-dihedral
circuits~\cite{cross2016scalable,amy2016finite}, which are recovered when
$\ell=3$ and the diagonal layer is restricted to the $T$ gate. The degree-$\ell$
logical IQP circuits~\cite{hangleiter2025fault} are the special case in which the
diagonal layer consists of IQP phase gates. Below we show that such a circuit
admits an affine--diagonal normal form, and that this normal form makes the
propagation of Pauli errors explicit: a level-$\ell$ circuit propagates Pauli
errors into $\mathcal{XD}^{(\ell-1)}_n$, and the commutator of two propagated
errors lies in $\mathcal{D}^{(\ell-2)}_n$.

\subsection{Affine--diagonal normal form}

Let $\langle X\rangle_n$ and $\langle\mathrm{CNOT}\rangle_n$ be the group generated by single-qubit Pauli $X$ gates and $\mathrm{CNOT}$ gates for a $n$-qubit circuit respectively, then circuit generated by $X$ and $\mathrm{CNOT}$ gates is
\begin{equation}
    \mathcal{A}_n=\langle X\rangle_n\rtimes\langle\mathrm{CNOT}\rangle_n
    \cong\mathbb{Z}_2^{\,n}\rtimes\mathrm{GL}(n,2)
    \label{equation: affine group}
\end{equation}
be the affine group, whose elements $A\in\mathcal{A}_n$ act on computational-basis
states as $A\ket{\mathbf{x}}=\ket{L\mathbf{x}+\mathbf{b}}$ with linear part
$L\in\mathrm{GL}(n,2)$ and shift $\mathbf{b}\in\mathbb{Z}_2^n$.

\begin{proposition}\label{proposition: normal form}
The affine group $\mathcal{A}_n$ normalizes $\mathcal{D}^{(\ell)}_n$, so the
unitaries generated by $\{X,\mathrm{CNOT},\mathcal{D}^{(\ell)}_n\}$ have the
semidirect-product structure
$\mathcal{AD}^{(\ell)}_n=\mathcal{D}^{(\ell)}_n\rtimes\mathcal{A}_n$, and
$\mathcal{AD}^{(\ell)}_n\subseteq\mathcal{C}^{(\ell)}_n$ for $\ell\geq2$.
\end{proposition}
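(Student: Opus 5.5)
The proof has three parts: first show that affine gates normalize diagonal gates, then obtain the semidirect-product normal form, and finally prove the hierarchy inclusion $\mathcal{AD}^{(\ell)}_n\subseteq\mathcal{C}^{(\ell)}_n$ by induction on $\ell$.

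\emph{Normalization.} Take $A\in\mathcal A_n$ with $A\ket{\mathbf x}=\ket{L\mathbf x+\mathbf b}$ and a diagonal gate $D_f$ with phase polynomial $f$. A direct computation on basis states gives
\begin{equation*}
A D_f A^\dagger\ket{\mathbf y}=\exp\!\bigl[\tfrac{2\pi i}{2^q}f(L^{-1}(\mathbf y+\mathbf b))\bigr]\ket{\mathbf y}.
\end{equation*}
So $AD_fA^\dagger=D_{f\circ\alpha^{-1}}$, where $\alpha(\mathbf x)=L\mathbf x+\mathbf b$.

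It remains to show that $f\circ\alpha^{-1}$ defines a gate in $\mathcal D^{(\ell)}_n$. I would argue generator by generator using \cref{equation: diagonal generating set}. Take a monomial gate $U_{q,\mathbf a}$ with $(q-1)+\mathrm{wt}(\mathbf a)\le\ell$. Substituting affine functions of $\mathbf y$ gives a product of $\mathrm{wt}(\mathbf a)$ affine forms over $\mathbb Z_2$, and each form must be lifted to $\mathbb Z_{2^q}$. The key identity is
\begin{equation*}
x\oplus y=x+y-2xy .
\end{equation*}
Each doubling raises the degree by one and lowers the effective precision by one. Iterating, every resulting monomial of degree $d'$ carries a coefficient divisible by $2^{\,d'-\mathrm{wt}(\mathbf a)}$, so it has precision at most $q-(d'-\mathrm{wt}(\mathbf a))$. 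Its level $(q'-1)+d'$ therefore stays at most $\ell$. Constants from $\mathbf b$ only contribute factors $1-y_j$, which expand into lower-degree monomials.

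Alternatively, one can avoid the bookkeeping by using the recursive definition. Diagonality is evidently preserved, and since $A$ is Clifford, $A D A^\dagger P (A D A^\dagger)^\dagger=A\bigl(D(A^\dagger P A)D^\dagger\bigr)A^\dagger$. With $A^\dagger P A\in\mathcal P_n$, an induction on $\ell$ then shows that $\mathcal C^{(\ell)}_n$ is closed under Clifford conjugation. I would use this cleaner route.

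\emph{Semidirect structure.} By normalization, any word in the generators can be rewritten by moving all affine letters to the right, using $AD=(ADA^\dagger)A$. This yields $U=D_fA$. For the trivial intersection, note that an element of $\mathcal D^{(\ell)}_n\cap\mathcal A_n$ is a diagonal permutation matrix, hence the identity, up to the global-phase convention. Normality of $\mathcal D^{(\ell)}_n$ in $\mathcal{AD}^{(\ell)}_n$ follows from normalization together with commutativity of $\mathcal D^{(\ell)}_n$.

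\emph{Hierarchy inclusion.} This is the main obstacle, because $\mathcal C^{(\ell)}_n$ is not closed under products, so $D_f\in\mathcal C^{(\ell)}$ and $A\in\mathcal C^{(2)}$ do not immediately give $D_fA\in\mathcal C^{(\ell)}$. I would use the fact that right multiplication by a Clifford preserves each level: $(UC)P(UC)^\dagger=U(CPC^\dagger)U^\dagger$ with $CPC^\dagger\in\mathcal P_n$. Hence $D_fA\in\mathcal C^{(\ell)}$ whenever $D_f\in\mathcal C^{(\ell)}$, and the latter holds by definition of $\mathcal D^{(\ell)}_n$. For $\ell\ge2$ one needs $A$ to be Clifford, which holds since $X$ and CNOT are Clifford.
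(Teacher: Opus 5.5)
Your proposal is correct and follows essentially the paper's own proof: normality from the inductive fact that Clifford conjugation preserves each level $\mathcal{C}^{(\ell)}_n$, restricted to its diagonal part; trivial intersection because the only diagonal permutation matrix is $I$; and the inclusion from $UPU^\dagger=D_f(APA^\dagger)D_f^\dagger$. Your alternative phase-polynomial bookkeeping via $x\oplus y=x+y-2xy$ is also valid but unnecessary, and the final inclusion step needs no induction (contrary to your opening sentence), only the recursive definition of $\mathcal{C}^{(\ell)}_n$ for $\ell\geq2$.
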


\begin{proof}
An element $A\in\mathcal{A}_n$ relabels the computational basis,
$A\ket{\mathbf{x}}=\ket{L\mathbf{x}+\mathbf{b}}$, so conjugation by $A$ sends a
diagonal gate to a diagonal gate. For a generating gate $U_{q,\mathbf{a}}$,
\begin{equation}
    A\,U_{q,\mathbf{a}}\,A^{-1}=\sum_{\mathbf{x}\in\mathbb{Z}_2^n}
    \exp\!\Bigl[\tfrac{2\pi\mathrm{i}}{2^{q}}
    \prod_{j=1}^{n}\bigl(L^{-1}(\mathbf{x}+\mathbf{b})\bigr)_j^{a_j}\Bigr]
    \ket{\mathbf{x}}\!\bra{\mathbf{x}}.
    \label{equation: conjugation by xcnot}
\end{equation}
Since $A$ is a Clifford operator, by induction \cite{zeng2008semi} it is easy to see that $A$ also preserves each Clifford-hierarchy level,
$A\,\mathcal{C}^{(\ell)}_n\,A^{-1}=\mathcal{C}^{(\ell)}_n$. Since $\mathcal{D}^{(\ell)}_n$
is exactly the diagonal part of $\mathcal{C}^{(\ell)}_n$, these two facts give
$A\,\mathcal{D}^{(\ell)}_n\,A^{-1}=\mathcal{D}^{(\ell)}_n$, so $\mathcal{D}^{(\ell)}_n$
is normal. As the identity is the only diagonal element of $\mathcal{A}_n$, we have
$\mathcal{D}^{(\ell)}_n\cap\mathcal{A}_n=\{I\}$. Finally, for $\ell\geq2$ and
any Pauli $P$, writing $U=D_fA$ gives
$UPU^{\dagger}=D_f\bigl(APA^{\dagger}\bigr)D_f^{\dagger}$. Here $APA^{\dagger}$ is a
Pauli, so conjugation by $D_f$ sends it into $\mathcal{C}^{(\ell-1)}_n$ by
\cref{equation: clifford hierarchy}, giving $\mathcal{AD}^{(\ell)}_n\subseteq\mathcal{C}^{(\ell)}_n$.
\end{proof}

Every $U\in\mathcal{AD}^{(\ell)}_n$ therefore factorizes uniquely in the
\emph{affine--diagonal normal form} $U=D_fA$ (or $U=AD_f$) with $D_f\in\mathcal{D}^{(\ell)}_n$
and $A\in\mathcal{A}_n$.

\subsection{Pauli error propagation and commutators}

How a Pauli error propagates through an affine--diagonal circuit is controlled by
the failure of a diagonal gate to commute with an $X$-type Pauli. For
$D\in\mathcal{D}^{(\ell)}_n$ and a Pauli $P$, with
$[a,b]=aba^{-1}b^{-1}$ the group commutator, propagation of $P$
through $D$ is
\begin{equation}
    D\,P\,D^{-1}=[D,P]\,P.
    \label{equation: X propagation}
\end{equation}
The commutator $[D,P]$ is diagonal \cite{cui2017diagonal}, and lies one level lower since $D$ is in the $\ell$th level, i.e.,
\begin{equation}
    [D,P]\in\mathcal{D}^{(\ell-1)}_n.
    \label{equation: commutator level reduction}
\end{equation}

\begin{proposition}\label{proposition: propagation and commutator}
Let $\ell\geq2$. Every propagated error $UPU^{\dagger}$ with
$U\in\mathcal{AD}^{(\ell)}_n$ and $P\in\mathcal{P}_n$ lies in
\begin{equation}
    \mathcal{XD}^{(\ell-1)}_n=\mathcal{D}^{(\ell-1)}_n\rtimes\langle X\rangle_n,
    \label{equation: XD group}
\end{equation}
and the commutator of any two such operators lies in $\mathcal{D}^{(\ell-2)}_n$,
where $\mathcal{D}^{(0)}_n=\{e^{i\phi}I_n\}$ is the group of global phases.
\end{proposition}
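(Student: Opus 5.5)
We prove the two claims of \cref{proposition: propagation and commutator} separately. In both parts the main tool is the affine--diagonal normal form of \cref{proposition: normal form} together with the level reduction \cref{equation: commutator level reduction}.

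\emph{Propagation.} We write $U=D_fA$ with $D_f\in\mathcal{D}^{(\ell)}_n$ and $A\in\mathcal A_n$. Since $A$ is a Clifford that permutes the computational basis, $P'=APA^\dagger$ is again a Pauli, say $P'=\omega\,X^{\mathbf u}Z^{\mathbf v}$. By \cref{equation: X propagation} we get $UPU^\dagger=D_fP'D_f^\dagger=[D_f,P']\,P'$. The $Z^{\mathbf v}$ factor and the phase commute with the diagonal $D_f$, so $[D_f,P']=[D_f,X^{\mathbf u}]$. By \cref{equation: commutator level reduction} this commutator lies in $\mathcal D^{(\ell-1)}_n$. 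Since $\ell-1\ge1$, we also have $\omega Z^{\mathbf v}\in\mathcal D^{(1)}_n\subseteq\mathcal D^{(\ell-1)}_n$, with the global phase absorbed in the generating set \cref{equation: diagonal generating set}. Therefore $UPU^\dagger=\bigl([D_f,X^{\mathbf u}]\,\omega Z^{\mathbf v}\bigr)X^{\mathbf u}\in\mathcal D^{(\ell-1)}_n\langle X\rangle_n$. This set is a group, and it is a semidirect product because $X^{\mathbf u}$ conjugation relabels basis states by a shift, which is an affine map. By \cref{proposition: normal form}, such a map normalizes $\mathcal D^{(\ell-1)}_n$, and a diagonal operator equal to $X^{\mathbf u}$ forces $\mathbf u=0$.

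\emph{Commutator.} We write $E_i=D_iX^{\mathbf u_i}$ with $D_i\in\mathcal D^{(\ell-1)}_n$. We then expand $[E_1,E_2]=E_1E_2E_1^{-1}E_2^{-1}$ and move all $X$ factors to the right. Writing ${}^{\mathbf u}D=X^{\mathbf u}DX^{\mathbf u}$, the $X$ parts cancel because $\langle X\rangle_n$ is abelian. Since $\mathcal D^{(\ell-1)}_n$ is abelian, the diagonal part collects into
\begin{equation}
[E_1,E_2]=\frac{D_1\,{}^{\mathbf u_1}\!D_2}{{}^{\mathbf u_2}\!D_1\,D_2}
=\bigl(D_1^{-1}\,{}^{\mathbf u_2}\!D_1\bigr)^{-1}\bigl(D_2^{-1}\,{}^{\mathbf u_1}\!D_2\bigr),
\end{equation}
up to bookkeeping of inverses. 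Each bracket has the form $D^{-1}X^{\mathbf u}DX^{\mathbf u}$, which is a group commutator $[D^{-1},X^{\mathbf u}]$ of a level-$(\ell-1)$ diagonal gate with a Pauli. By \cref{equation: commutator level reduction} applied at level $\ell-1$, each bracket lies in $\mathcal D^{(\ell-2)}_n$. Because that set is a group, the product lies there too.

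The main obstacle is the edge case $\ell=2$. Here level-$(\ell-1)=1$ diagonals are $Z$-strings with phases, and the commutator with $X^{\mathbf u}$ is only a sign $\pm1$. This sign is in $\mathcal D^{(0)}_n$ only with the stated convention of global phases, so \cref{equation: commutator level reduction} must be read with $\mathcal D^{(0)}_n=\{e^{i\phi}I\}$. I would check this case directly. A second point to verify is the sign and inverse bookkeeping in the expansion above. To do this I would compute explicitly on basis states, $[E_1,E_2]\ket{\mathbf x}$, which gives the phase ratio
\begin{equation}
\frac{d_1(\mathbf x)\,d_2(\mathbf x+\mathbf u_1)}{d_2(\mathbf x)\,d_1(\mathbf x+\mathbf u_2)}\times(\text{appropriate shifts}).
\end{equation}
This confirms that the result is a product of two finite differences of the phase polynomials. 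Each finite difference lowers the degree by one at fixed precision, equivalently the level by one via \cref{equation: W_ell}, so the product is in $\mathcal D^{(\ell-2)}_n$.
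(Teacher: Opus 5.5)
Your proof is correct and follows the paper's route. For propagation you use the normal form $U=D_fA$ with the level reduction $[D_f,P']\in\mathcal D^{(\ell-1)}_n$, and for the commutator you split it into two elementary commutators of level-$(\ell-1)$ diagonals with $X$-strings; your $[D_1^{-1},X^{\mathbf u_2}]^{-1}[D_2^{-1},X^{\mathbf u_1}]$ is exactly the paper's $[D_1,X_2][X_1,D_2]$. Your explicit absorption of $\omega Z^{\mathbf v}$ into $\mathcal D^{(\ell-1)}_n$ and your treatment of the $\ell=2$ global-phase convention fill in steps the paper leaves implicit.

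One small correction to your optional finite-difference check. In $\mathbb Z_{2^q}$ arithmetic the shift $x_j\mapsto 1-x_j$ also produces a same-degree term with coefficient $2$, i.e.\ a lower-precision term. So what drops by one is the level, not the degree at fixed precision.
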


\begin{proof}
For the propagation, write $U=D^{(\ell)}A$ in the affine--diagonal normal form of
\cref{proposition: normal form}. Since $A$ is Clifford,
$APA^{\dagger}=P'$ for some
$P'\in\mathcal{P}_n$. By \cref{equation: commutator level reduction},
\begin{equation}
    UPU^{\dagger}
    =D^{(\ell)}P'(D^{(\ell)})^{\dagger}
    =D^{(\ell-1)}X'
\end{equation}
for some $X'\in\langle X\rangle_n$. Therefore the propergated errors is an element of $\mathcal{XD}^{(\ell-1)}_n$.

For the commutator, let $D_1,D_2\in\mathcal{D}_n^{(\ell-1)}$ and
$X_1,X_2\in\langle X\rangle_n$. Since diagonal operators commute and
$X_1X_2=X_2X_1$, we have
\begin{equation}
    [D_1X_1,D_2X_2]
    =[D_1,X_2][X_1,D_2]\in\mathcal{D}_n^{(\ell-2)}.
\end{equation}
Each factor lies in $\mathcal{D}_n^{(\ell-2)}$ by
\cref{equation: commutator level reduction}.
\end{proof}

As a concrete example, take $\ell=3$, i.e.\ the gate set
$\{X,\mathrm{CNOT},Z,\mathrm{CZ},T,\mathrm{CS},\mathrm{CCZ}\}$. The level reduction
\cref{equation: commutator level reduction} is exhibited by the single-gate
commutators
\begin{equation}
    [T,X]=e^{-\mathrm{i}\pi/4}S,
    \qquad
    [\mathrm{CCZ},X_1]=\mathrm{CZ}_{23},
    \nonumber
\end{equation}
which send the level-$3$ gates $T,\mathrm{CCZ}$ to the level-$2$ gates
$S,\mathrm{CZ}$. Propagated Pauli operators therefore lie in $\mathcal{XD}^{(2)}_n$, with
elements of the form
\begin{equation}  \omega^{p}\prod_{j>i}\mathrm{CZ}_{ij}^{c_{ij}}\bigotimes_{k=1}^{n} S_k^{b_k} X_k^{a_k}
    ,
    \label{equation: XD2 element}
\end{equation}
where $\omega=e^{2\pi\mathrm{i}/8}$, $p\in\mathbb{Z}_8$, $\mathbf{a}\in\mathbb{Z}_2^n$,
$\mathbf{b}\in\mathbb{Z}_4^n$, and $\mathbf{c}\in\mathbb{Z}_2^{\binom{n}{2}}$. Their
pairwise commutators lie in $\mathcal{D}^{(1)}_n$, i.e.\ they are $Z$-type Pauli
operators. The failure of two propagated Pauli operators to commute in $\mathcal{A}\mathcal{D}^{(3)}$ circuits are $Z$ Pauli strings, the structural property exploited by the syndrome-sampling algorithm
of the following section.

\section{Syndrome sampling}\label{sec:syndrome_sampling}

In this section, we introduce the algorithm that performs efficient syndrome sampling. In non-Clifford circuits, Pauli operators propagate to non-Pauli operators, and the syndrome measurement results are no longer deterministic. However, for $\mathcal{A}\mathcal{D}^{(3)}$ circuits, the distribution of syndrome for a given circuit-level Pauli noise is relatively simple, i.e., lives randomly in an affine subspace as we show now.

\subsection{Syndrome sampling and noise estimation in $\mathcal{AD}^{(3)}$ circuits\label{subsection: syndrome sampling}}
We show that we can still efficiently evaluate the syndrome if the affine-diagonal circuit is in the 3rd level of the Clifford hierarchy, and evaluate the syndrome expectation value if the circuit is in the 4th level of the Clifford hierarchy in this family.
\begin{syndromerestatement}[Syndrome sampling, restated]
Consider a stabilizer input state evolved by an
$\mathcal{AD}^{(\ell)}$ circuit and followed by commuting stabilizer
measurements. For a fixed configuration of circuit-level Pauli faults,
stabilizer expectation values can be evaluated efficiently for
$\ell\leq4$. For $\ell\leq3$, the complete syndrome can moreover be
sampled exactly in polynomial time using \cref{algorithm: syndrome sampling concise}.
\end{syndromerestatement}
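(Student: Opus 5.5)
The plan is to reduce everything to the input-side expression of \cref{equation: syndrome expectation} and then use \cref{proposition: propagation and commutator}. For a fixed fault configuration, we first commute every circuit-level Pauli fault to the input. We write $U=U_2U_1$ at the fault location, so a fault $P$ becomes $U_1^\dagger P U_1$. By \cref{proposition: propagation and commutator}, applied to $U_1^\dagger\in\mathcal{AD}^{(\ell)}_n$, this lies in $\mathcal{XD}^{(\ell-1)}_n$.

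The product $e^{\mathrm{ini}}$ of these propagated faults stays in the group $\mathcal{XD}^{(\ell-1)}_n$. Likewise, $M^{\mathrm{ini}}=U^\dagger M U\in\mathcal{XD}^{(\ell-1)}_n$ for each measured Pauli $M$, and for every product $M^{(\mathbf r)}$.

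We then have $\langle M\rangle=\Tr([(e^{\mathrm{ini}})^\dagger,M^{\mathrm{ini}}]\rho_{\mathrm{ini}})$. By the commutator part of \cref{proposition: propagation and commutator}, the commutator lies in $\mathcal{D}^{(\ell-2)}_n$.
\begin{itemize}
\item For $\ell\le 3$ it is a $Z$-type Pauli up to phase, and its expectation in a stabilizer state is $0$ or $\pm1$. This is computable by tableau methods in $\mathrm{poly}(n)$.
\item For $\ell=4$ it is a diagonal Clifford built from CZ and S gates. The expectation $\bra{\Omega}K\ket{\Omega}$ is an overlap of the stabilizer states $\ket{\Omega}$ and $K\ket{\Omega}$, again computable in polynomial time.
\end{itemize}
The remaining point is that all these operators have polynomial-size descriptions. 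Each is a phase polynomial of degree at most $\ell-1$, together with an $X$ part. These can be tracked gate by gate through the circuit, using the explicit single-gate commutators and the conjugation rule \cref{equation: conjugation by xcnot}. This gives the first claim.

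For sampling at $\ell\le3$, the key structural step is to show that the map $\mathbf r\mapsto K^{(\mathbf r)}=[(e^{\mathrm{ini}})^\dagger,(M^{(\mathbf r)})^{\mathrm{ini}}]$ is a group homomorphism into $Z$-type Paulis up to phase.
\begin{itemize}
\item The $M_i^{\mathrm{ini}}$ commute with each other, since the $M_i$ commute.
\item The commutators $K_i$ are $Z$-Paulis, which commute with everything in $\mathcal{XD}^{(2)}_n$ up to a sign.
\end{itemize}
From the identity $[a,bc]=[a,b]\,b[a,c]b^{-1}$, conjugating $K_j$ by $M_i^{\mathrm{ini}}$ only changes a sign. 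The identity then gives $K^{(\mathbf r)}=\pm\prod_i K_i^{r_i}$.

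Consequently $\langle M^{(\mathbf r)}\rangle\neq0$ exactly when $\prod_iK_i^{r_i}$ commutes with every initial stabilizer $S_j$, up to sign in the stabilizer group. This is the linear condition $B^T\mathbf r=0$, so $V=\ker B^T$ is a subspace.

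On $V$, the characteristic function $\chi(\mathbf r)=\langle M^{(\mathbf r)}\rangle$ is a $\pm1$-valued character. This is because the joint outcome distribution of commuting observables satisfies $\langle M^{(\mathbf r)}M^{(\mathbf r')}\rangle=\langle M^{(\mathbf r)}\rangle\langle M^{(\mathbf r')}\rangle$ whenever both factors are deterministic. Hence $\chi(\mathbf r)=(-1)^{\mathbf r\cdot\mathbf s_0}$ on $V$, with $\chi=0$ off $V$.

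The distribution then follows by Fourier inversion. We have $\Pr(\mathbf s|e)=2^{-m}\sum_{\mathbf r}(-1)^{\mathbf r\cdot\mathbf s}\langle M^{(\mathbf r)}\rangle=2^{-m}\sum_{\mathbf r\in V}(-1)^{\mathbf r\cdot(\mathbf s+\mathbf s_0)}$. This equals \cref{equation: affine syndrome distribution}. The steps of \cref{algorithm: syndrome sampling concise} (Gaussian elimination, solving for $\mathbf s_0$, uniform sampling from $\ker G$) are each polynomial.

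I expect the main obstacle to be the sign and phase bookkeeping in the homomorphism step. The issue is proving that the deterministic values really form a character on $V$, rather than acquiring extra phases from the non-Pauli nature of $M^{\mathrm{ini}}$. I would first try to sidestep it with the probabilistic argument above: deterministic commuting observables multiply. Only the support $V$ then needs the algebraic commutator computation.
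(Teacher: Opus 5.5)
Your proposal is correct and follows the paper's proof. Both reduce $\langle M\rangle$ to the input-side commutator $\Tr([(e^{\mathrm{ini}})^\dagger,M^{\mathrm{ini}}]\rho_{\mathrm{ini}})$, use the level reduction to $\mathcal{D}^{(\ell-2)}_n$, and Fourier-invert over the deterministic subspace $V=\ker B^T$ with $\mathbf{s}_0$ found by Gaussian elimination; your homomorphism step via $[a,bc]=[a,b]\,b[a,c]b^{-1}$ and the character argument on $V$ make explicit the linearity of $V$ and the existence of a consistent $\mathbf{s}_0$, which the paper's proof asserts and later uses in its row-update rule in the cost analysis.
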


\begin{proof}
    Consider an initial logical state $\rho_{\text{ini}}=\ket{\overline{\psi}}\bra{\overline{\psi}}$ evolving under the logical circuit to the ideal final state $\rho_f=\overline{U}\rho_{\text{ini}}\overline{U}^{\dagger}$ and a stabilizer element $M$ to be measured at the end of the circuit. We in addition insert a collection of sampled circuit Pauli errors $\{e_i\}, i=0.5,...,T+0.5$, where $e_{i-0.5}$ is the Pauli error right before the $i$th circuit layer. Then the probability distribution of measuring $M=\pm 1$ is given by $\frac{1\pm\langle M\rangle}{2}$ and the expectation value $\langle M\rangle$ is given by
\begin{align}
    \langle M\rangle=
    \Tr [M(\prod^{T-0.5}_{i=-0.5}e^f_{T-i})\rho_f(\prod^{T-0.5}_{i=-0.5}e^f_{T-i})^{\dagger}],
\end{align}
where $e^f_{j-0.5}=U_{j,T+1}e_{j-0.5}U^{\dagger}_{j,T+1}$ is the circuit-level fault right before layer $j$ propagated to the end. Equivalently, one could propagate errors and measurements to the beginning of the circuit, resulting in measuring $M_{\text{ini}}=\overline{U}^{\dagger}M\overline{U}$:
\begin{align}
    \langle M\rangle&=\Tr [M_{\text{ini}}(\prod^{T-0.5}_{i=-0.5}e^{\text{ini}}_{T-i})\rho_{\text{ini}}(\prod^{T-0.5}_{i=-0.5}e^{\text{ini}}_{T-i})^{\dagger}]\\
    &=\Tr[[(\prod^{T-0.5}_{i=-0.5}e^{\text{ini}}_{T-i})^{\dagger},M_{\text{ini}}]\,\rho_{\text{ini}}]\\
    &=\Tr[[(e^{\text{ini}})^{\dagger},M_{\text{ini}}]\,\rho_{\text{ini}}]\label{equation: initial_origin_commutator}
\end{align}
where $[a,b]=aba^{-1}b^{-1}$ is the commutator in group theory and $e^{\text{ini}}_{j-0.5}=U_{1,j}^{\dagger}e_{j-0.5}U_{1,j}$ is the circuit-level fault right before layer $j$ propagated to the beginning. We let $e^{\text{ini}}$ denote the product of all Pauli operators propagated to the beginning of the circuit. If the circuit is in $\mathcal{AD}^{(4)}$ then $[(e^{\text{ini}})^{\dagger},M_{\text{ini}}]$ is some Clifford operator according to \cref{proposition: propagation and commutator}. Therefore, $\cref{equation: initial_origin_commutator}$ can be calculated using Clifford circuit simulators. Now we consider circuits in $\mathcal{AD}^{(3)}$, then $[(e^{\text{ini}})^{\dagger},M_{\text{ini}}]$ is just diagonal Pauli operators. As a result $\langle M\rangle$ is either $\pm 1,0$ if the initial state $\rho_{\text{ini}}$ is a stabilizer state.

Let $\mathcal{M}=\langle M_1, M_2, ..., M_m\rangle$ be the abelian Pauli subgroup generated by the measured stabilizer generators. We use the binary vector $\vec{r}\in\mathbb{Z}_2^m$ to label each element $M^{(\vec{r})}=\prod\limits^m_{i}M^{r_i}_i$ in $\mathcal{M}$. From \cref{equation: initial_origin_commutator} we see that the element in $\mathcal{M}$ that stabilizes the output noisy state $\rho_f$ forms a linear subgroup $\mathcal{M}^{\ast}\leq \mathcal{M}$ and their labels $\vec{r}^*$ is in a corresponding linear subspace $V$ of $\mathbb{Z}_2^m$ with dimension $\dim V=k\leq m$.

To sample the syndrome given the circuit-level Pauli noise, we first observe that there is an invertible map between the syndrome expectation of all the stabilizer elements and the probability distribution of the syndrome bit-strings. Given a circuit where $m$ stabilizer generators are measured, we have
\begin{align}
   \langle M^{(\vec{r})}\rangle&=\sum_{\vec{s}\in\{0,1\}^{m}} (-1)^{\vec{r}\cdot\vec{s}}\Pr(\vec{s})\\
      \Pr(\vec{s})&=\frac{1}{2^m}\sum_{\vec{r}\in\{0,1\}^{m}} (-1)^{\vec{r}\cdot\vec{s}}\langle M^{(\vec{r})}\rangle.
\end{align}
where $\vec{s}\in\mathbb{Z}^m_2$ is the syndrome binary vector. Note that $\langle M\rangle=0$ if $\pm M$ is not a stabilizer of the final noisy state, therefore we can focus on the subspace $V$ of the deterministic stabilizers:
\begin{equation}
    \Pr(\vec{s})=\frac{1}{2^m}\sum_{\vec{r}^{\ast}\in V} (-1)^{\vec{r}\cdot\vec{s}}\langle M^{(\vec{r}^{\ast})}\rangle.
\end{equation}
Let $\vec{s}_0\in\mathbb{Z}_2^m$ be any possible syndorme, then it must satisfy:
\begin{equation}
    (-1)^{\vec{r}^{\ast}\vec{s}_0}=\langle M^{(\vec{r}^{\ast})}\rangle, \forall \vec{r}^{\ast}\in V
\end{equation}
since $M^{(\vec{r}^{\ast})}$, upto a sign, is a stabilizer.
Therefore, we have the probability distribution as a Hadamard transformation of a constant function with parameter shifted by $\vec{s}_0$:
\begin{align}
    \Pr(\vec{s})&=\frac{1}{2^m}\sum_{\vec{r}\in V} (-1)^{\vec{r}\cdot\vec{s}}\langle M^{(\vec{r})}\rangle.\\
    &= \frac{1}{2^m}\sum_{\vec{r}\in V} (-1)^{\vec{r}\cdot(\vec{s}+\vec{s}_0)}\\
    &= \frac{1}{2^{m-k}}\mathbf{1}_{V^{\perp}}(\vec{s}+\vec{s}_0).
\end{align}
where $\mathbf{1}_{V^{\perp}}$ is the indicator function of the orthogonal complement $V^{\perp}$ of $V$ and $|V^{\perp}|=2^{m-k}$.  As a result, the syndrome distribution is uniform in an affine subspace with the linear subspace $V^{\perp}$ shifted by $\vec{s}_0$.

To find any valid syndrome $\vec{s}_0$, we first find a basis of $V$ by looking at the syndrome of $[(e^{\text{ini}})^{\dagger}, M^{\text{ini}}_i]$, since $\vec{r}^{\ast}$ is in $V$ iff $[(e^{\text{ini}})^{\dagger}, (M^{\text{ini}})^{(\vec{r}^{\ast})}]$ has trivil syndrome. We define matrices $R$ and $H$ where rows of $R$ and columns of $H$ are the symplectic representation of the Pauli $[(e^{\text{ini}})^{\dagger}, M_i^{\text{ini}}]$  and initial state stabilizer generators, respectively. Denoting $\odot$ as the matrix product using the symplectic inner product in $\mathbb{Z}_2^n\times\mathbb{Z}_2^n$, we have the $i$th row of $R\odot H$ represents the syndrome of $[(e^{\text{ini}})^{\dagger}, M_i^{\text{ini}}]$ for the inital stabilizer state $\rho^{\text{ini}}$. Define the generator matrix $G_V\in\mathbb{Z}_2^{k\times m}$ of the subspace $V$ where each row is a basis vector $\vec{v}_i\in\mathbb{Z}_2^m$ of $V$. The rows of $G_V$ are obtained by computing a basis of $\ker[(R\odot H)^T]$ using Gaussian elimination. Here $R\odot H$ is the matrix $B$ in \cref{algorithm: syndrome sampling concise}.
Define $\vec{a}\in\mathbb{Z}_2^k$ such that $(-1)^{\vec{a}}=(\langle M^{\vec{v}_i}\rangle)_i$, then $\vec{s}_0$ must satisfy $G_V \vec{s}_0=\vec{a}$ so we can choose,
\begin{equation}
    \vec{s}_0=G_V^{g}\vec{a}
\end{equation}
where $G_V^g$ can be any generalized inverse satisfying $G_VG_V^gG_V=G_V$. Finally, choose a matrix $N$ whose columns form a basis of $\ker G_V=V^{\perp}$ and sample $\vec{s}=\vec{s}_0+N\vec{z}$ with $\vec{z}\in\mathbb{Z}_2^{m-k}$ uniform, as in \cref{algorithm: syndrome sampling concise} in the main text.
\end{proof}

The sampling procedure in \cref{algorithm: syndrome sampling concise} in the main text can be done by combining Gaussian elimination with the Tableau simulator in stim \cite{gidney2021stim}.

\section{Efficient benchmark}\label{sec:efficient_benchmark}

\subsection{Logical fidelity}\label{sec:logical_fidelity}
We consider the logical circuit sampling tasks and estimate the logical fidelity at the end of the circuit before the final syndrome measurement and logical measurements. A notion of logical output fidelity $F_{\mathrm{out}}$ was defined in \cite{hangleiter2025fault}, which is the overlap between the ideal output state and the noisy state after being corrected back to the code space by the recovery map $\mathcal{R}$. In the context of CSS code, one typically performs a transversal measurement at the end and therefore syndrome in only one basis is available. In this case, the recovery map is based on the available syndrome, and the remaining syndrome, which is virtually performed by a perfect syndrome measurement:
\begin{equation}
    F_{\mathrm{out}}=\Tr[\mathcal{R}(\rho)\sigma].
\end{equation}
A decoding algorithm is implicitly in the recovery map $\mathcal{R}$. We require the decoding and recovery procedure of the remaining syndrome to be independent of the decoding of the available syndrome, so that the resulting logical fidelity reflects the sample quality. This is because the remaining syndrome, which is not available during the logical sampling, could in principle improve the sampling quality if considering correlated decoding.

\begin{theorem}[Logical fidelity estimation]
\label{theorem: logical fidelity estimation}
Under the assumptions of the main-text logical-fidelity section, for $U\in\mathcal{AD}^{(3)}_n$ and efficiently
sampleable circuit-level Pauli noise, the recovered output fidelity admits
an unbiased Monte Carlo estimator with polynomial cost per attempted
sample in the total number of data and ancilla qubits and the circuit
size, excluding classical decoding and evaluation of the postselection rule.
\end{theorem}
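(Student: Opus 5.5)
The estimator samples a fault trajectory, then a syndrome conditioned on it, and outputs a stabilizer-state overlap; unbiasedness and polynomial cost follow from three facts: the fidelity identity in \cref{equation: output fidelity}, exact syndrome sampling from \cref{proposition: syndrome sampling}, and the Clifford structure from \cref{proposition: propagation and commutator}.

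\emph{Step 1: a product form for the fidelity.} I would first justify \cref{equation: output fidelity} for $U\in\mathcal{AD}^{(3)}_n$. Using $P_{\mathbf s}\Pi^{\mathrm{out}}_{\mathbf s,\mathcal M}=\Pi^{\mathrm{out}}_{0,\mathcal M}P_{\mathbf s}$ and $\Pi^{\mathrm{out}}_{0,\mathcal M}\ket{\overline\phi}=\ket{\overline\phi}$, the projector is absorbed into the bra. Inserting $U^\dagger U$ and using \cref{equation: projector intertwining} then gives $\bra{\overline\psi}C_{\mathbf s}C_e\ket{\overline\psi}$. This inner product does not depend on $\Pi_{\mathbf s}$. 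I would therefore rewrite the summand as
\[
p_e\Pr(\mathbf s\mid e)\,\frac{|\bra{\overline\psi}C_{\mathbf s}C_e\ket{\overline\psi}|^2}{\Pr(\mathbf s\mid e)},
\]
and check that the last ratio equals the normalized post-measurement fidelity. The cleaner route is to write the fidelity as $\sum_{e,\mathbf s}p_e\Pr(\mathbf s|e)\,f(e,\mathbf s)$, where $f(e,\mathbf s)$ is the fidelity of the normalized recovered state with the target. Then
\[
\hat F=f(e,\mathbf s),\qquad e\sim p_e,\quad \mathbf s\sim\Pr(\cdot\mid e),
\]
is unbiased by construction. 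For postselection, the estimator is the indicator of acceptance times $f$, normalized by the acceptance rate. In the ratio form this is consistent as a ratio estimator, and the unnormalized accepted mass is unbiased.

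\emph{Step 2: cost of sampling $e$ and $\mathbf s$.} The trajectory $e$ is efficiently sampleable by assumption. The conditional syndrome $\mathbf s\sim\Pr(\cdot|e)$ is sampled exactly in polynomial time by \cref{algorithm: syndrome sampling concise}, via \cref{proposition: syndrome sampling} and \cref{equation: affine syndrome distribution}. The virtual ancilla-stabilizer outcomes are included in $\mathcal M$, so they are covered by the same algorithm. The decoder then maps $\mathbf s$ to $P_{\mathbf s}$; its cost is excluded by hypothesis, as is the cost of evaluating the postselection rule.

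\emph{Step 3: evaluating $f(e,\mathbf s)$.} By \cref{proposition: propagation and commutator}, both $C_e$ (a product of propagated Paulis) and $C_{\mathbf s}=U^\dagger P_{\mathbf s}U$ lie in $\mathcal{XD}^{(2)}_n$, so they are Clifford. Their explicit form \cref{equation: XD2 element} is obtained gate by gate: each layer is back-propagated using the single-gate commutators (e.g.\ $[T,X]$, $[\mathrm{CCZ},X_1]$), at $\mathrm{poly}(n)$ cost per layer.

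The input $\ket{\overline\psi}$ (data together with ancillas) must be a stabilizer state for this to work. For a non-stabilizer logical input such as $\overline T\ket{\overline+}$, I would absorb the preparing diagonal gate into $U$, as in the cultivation example. The quantity $f$ then reduces to an overlap between the stabilizer states $\ket{\overline\psi}$ and $C_{\mathbf s}C_e\ket{\overline\psi}$. Alternatively, it can be written as $\Tr(\Pi^{\mathrm{in}}_0 C\,\rho\,C^\dagger)$ restricted to the conditional state, which is computable by tableau methods in $O(n^3)$ time. Including the conditioning on $\mathbf s$ amounts to projecting the stabilizer state $C_e\ket{\overline\psi}$ onto back-propagated Pauli-like checks. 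Since those checks are Clifford conjugates, the projected state remains a stabilizer state, computable in polynomial time.

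\emph{Main obstacle.} The delicate point is the normalization in Step 1. $C_{\mathbf s}C_e$ maps into the code space only on the syndrome-$\mathbf s$ branch, so I must verify that $|\bra{\overline\psi}C_{\mathbf s}C_e\ket{\overline\psi}|^2$ already equals $\Pr(\mathbf s|e)f(e,\mathbf s)$ and that no cross terms between different $\mathbf s$ survive. I would settle this by inserting $\Pi_{\mathbf s}$ explicitly, back-propagating it to $U^\dagger\Pi_{\mathbf s}U$, and using the commutator structure from \cref{proposition: propagated error commutator}. That structure shows the back-propagated projector acts on $C_e\ket{\overline\psi}$ as a $Z$-type Pauli projector, which makes the decomposition exact.
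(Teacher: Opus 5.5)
Your proposal is correct and is essentially the paper's proof: absorb $\Pi^{\mathrm{out}}_{0,\mathcal M}$ into the target bra, then back-propagate $P_{\mathbf s}$ so that each branch weight $w_e(\mathbf s)=|\bra{\overline{\psi}}C_{\mathbf s}C_e\ket{\overline{\psi}}|^2$ is a stabilizer overlap, then sample $e$ and $\mathbf s$ via \cref{algorithm: syndrome sampling concise}, and output $w_e(\mathbf s)/q_e(\mathbf s)$ with $q_e(\mathbf s)=2^{-(m-\dim V)}$ taken from the affine syndrome distribution, which is exactly your $f(e,\mathbf s)$. The ``main obstacle'' you flag is already settled by your Step 1, since the recovery has Kraus operators $P_{\mathbf s}\Pi^{\mathrm{out}}_{\mathbf s,\mathcal M}$ (no cross terms arise) and the absorption identity is exact, so you can drop the detour through back-propagated projectors; that is just as well, because the back-propagated checks being Clifford does not by itself guarantee that projecting a stabilizer state onto their eigenspaces yields a stabilizer state, so Step 3's ``Clifford conjugates'' justification would need more structure.
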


\begin{proof}
We include the data and stabilizer ancillas in a single system, with
joint input and output reference projectors $\Pi_{0}^{\mathrm{in}}$ and
$\Pi_{0}^{\mathrm{out}}$. For product reference subspaces,
\begin{equation}
    \Pi_{0}^{\mathrm{in}}=\Pi_{0}^{D,\mathrm{in}}\otimes\Pi_A^{\mathrm{in}},
    \qquad
    \Pi_{0}^{\mathrm{out}}=\Pi_{0}^{D,\mathrm{out}}\otimes\Pi_A^{\mathrm{out}}.
\end{equation}
The ideal circuit satisfies $U\Pi_{0}^{\mathrm{in}}U^\dagger=\Pi_{0}^{\mathrm{out}}$,
and the joint states obey
$\Pi_{0}^{\mathrm{in}}\ket{\overline{\psi}}=\ket{\overline{\psi}}$ and
$\ket{\overline{\phi}}=U\ket{\overline{\psi}}$.
Let $\mathcal M$ denote the commuting output stabilizer checks used in the
recovery, including any virtual checks. As in the main text, virtual
restoration of the ancillas to their ideal pure state identifies the joint
overlap with the data-block fidelity when the ideal output factorizes.

For a sampled Pauli fault configuration $e$ of probability $p_e$, write the
faulty circuit as $U_e=UC_e$, propagating its errors to the input. For
$U\in\mathcal{AD}^{(3)}_n$, $C_e$ is Clifford by
\cref{proposition: propagation}. Before syndrome measurement,
\begin{equation}
    \rho_{\mathrm{out}}=\sum_e p_e\,
    U_e\ket{\overline{\psi}}\bra{\overline{\psi}}U_e^\dagger.
\end{equation}
The Pauli recovery and its input-frame representation satisfy
\begin{equation}
    P_{\mathbf{s}}\Pi_{\mathbf{s},\mathcal M}^{\mathrm{out}}
    =\Pi_{0,\mathcal M}^{\mathrm{out}}P_{\mathbf{s}},
    \qquad
    C_{\mathbf{s}}=U^\dagger P_{\mathbf{s}}U,
    \qquad P_{\mathbf{s}}U=UC_{\mathbf{s}}.
\end{equation}
We can therefore derive the fidelity step by step:
\begin{align}
    F_{\mathrm{out}}
    &=\sum_{e,\mathbf{s}}p_e
    \left|\bra{\overline{\phi}}P_{\mathbf{s}}
    \Pi_{\mathbf{s},\mathcal M}^{\mathrm{out}}
    U_e\ket{\overline{\psi}}\right|^2\nonumber\\
    &=\sum_{e,\mathbf{s}}p_e
    \left|\bra{\overline{\phi}}P_{\mathbf{s}}
    \Pi_{\mathbf{s},\mathcal M}^{\mathrm{out}}
    UC_e\ket{\overline{\psi}}\right|^2\nonumber\\
    &=\sum_{e,\mathbf{s}}p_e
    \left|\bra{\overline{\phi}}\Pi_{0,\mathcal M}^{\mathrm{out}}
    P_{\mathbf{s}}UC_e\ket{\overline{\psi}}\right|^2\nonumber\\
    &=\sum_{e,\mathbf{s}}p_e
    \left|\bra{\overline{\phi}}\Pi_{0,\mathcal M}^{\mathrm{out}}
    UC_{\mathbf{s}}C_e\ket{\overline{\psi}}\right|^2\nonumber\\
    &=\sum_{e,\mathbf{s}}p_e
    \left|\bra{\overline{\phi}}UC_{\mathbf{s}}C_e
    \ket{\overline{\psi}}\right|^2\nonumber\\
    &=\sum_{e,\mathbf{s}}p_e
    \left|\bra{\overline{\psi}}U^\dagger U C_{\mathbf{s}}C_e
    \ket{\overline{\psi}}\right|^2\nonumber\\
    &=\sum_{e,\mathbf{s}}p_e
    \left|\bra{\overline{\psi}}C_{\mathbf{s}}C_e
    \ket{\overline{\psi}}\right|^2.
    \label{equation: IQP logical fidelity}
\end{align}
Successive steps propagate the faults to the input, move the syndrome
projector through the recovery, back-propagate $P_{\mathbf{s}}$, absorb
$\Pi_{0,\mathcal M}^{\mathrm{out}}$ into the ideal target bra, substitute
$\bra{\overline{\phi}}=\bra{\overline{\psi}}U^\dagger$, and cancel
$U^\dagger U$. The projector is absorbed because $\mathcal M$ stabilizes
the ideal output. The data syndrome need not be complete.

Both $C_e$ and $C_{\mathbf{s}}$ are Clifford, so each squared overlap
$w_e(\mathbf{s})$ can be computed in $O(n^3)$ time from the stabilizer
tableaux of the two states on $n$ physical qubits~\cite{aaronson2004improved}.
Sampling $e\sim p_e$ and a syndrome of probability
$q_e(\mathbf{s})$ gives the unbiased sample value
$w_e(\mathbf{s})/q_e(\mathbf{s})$. The branch probability is available from
\cref{equation: affine syndrome distribution}. Thus each sample has
polynomial cost in $n$ and the circuit size, excluding the prescribed
classical processing.
\end{proof}

\subsection{Observable expectation}
\label{subsection: observable expectation}

\begin{theorem}[Observable expectation estimation]
\label[theorem]{proposition: observable expectation}
Let $U\in\mathcal{AD}^{(3)}_n$ satisfy
\cref{equation: projector intertwining}. For any
$O\in\mathcal{XD}^{(2)}_n$, the expectation value in
\cref{equation: corrected observable} admits an unbiased Monte Carlo
estimator with polynomial cost per sample under circuit-level Pauli noise.
\end{theorem}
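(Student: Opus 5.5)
I will follow the template of the proof of \cref{theorem: logical fidelity estimation}: first derive \cref{equation: corrected observable} by propagating everything to the input, then show that each summand is an efficiently computable stabilizer quantity, and finally sample $(e,\mathbf s)$ to obtain an unbiased estimator.

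\textbf{Step 1: derivation of \cref{equation: corrected observable}.} For a fault trajectory $e$, I write the faulty circuit as $UC_e$, with $C_e$ Clifford by \cref{proposition: propagation}. The corrected output on the branch $\mathbf s$ is $Q_{\mathbf s}\Pi_{\mathbf s}^{\mathrm{out}}UC_e\ket{\Omega}$. Using $Q_{\mathbf s}\Pi_{\mathbf s}^{\mathrm{out}}=\Pi_0^{\mathrm{out}}Q_{\mathbf s}$, $Q_{\mathbf s}U=UC_{\mathbf s}$ and \cref{equation: projector intertwining}, this becomes
\begin{equation}
Q_{\mathbf s}\Pi_{\mathbf s}^{\mathrm{out}}UC_e\ket{\Omega}=\Pi_0^{\mathrm{out}}UC_{\mathbf s}C_e\ket{\Omega}=U\Pi_0^{\mathrm{in}}C_{\mathbf s}C_e\ket{\Omega}.
\end{equation}
Taking the expectation of $O$, summing over $\mathbf s$ (the branch weights are built into the unnormalized projection) and over $e$ with weights $p_e$, and substituting $O_{\mathrm{in}}=U^\dagger OU$ gives \cref{equation: corrected observable}. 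The ideal circuit $U$ cancels exactly as in the fidelity proof.

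\textbf{Step 2: each summand is a stabilizer computation.} $C_{\mathbf s}$ is the back-propagation of a Pauli, so it is Clifford by \cref{proposition: propagation}. For $O_{\mathrm{in}}$ I use the normal form $U=D_fA$ of \cref{proposition: normal form}. Conjugation of $O\in\mathcal{XD}^{(2)}_n$ by $A^\dagger$ stays in $\mathcal{XD}^{(2)}_n$, since affine maps normalize diagonal gates and send $X$-strings to $X$-strings up to $Z$-phases. Conjugation by $D_f^\dagger$ fixes the diagonal part, and for the $X$-part $X'$ it gives $D_f^\dagger X'D_f\in\mathcal{XD}^{(2)}_n$ by \cref{equation: commutator level reduction}. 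Hence $O_{\mathrm{in}}\in\mathcal{XD}^{(2)}_n$ is Clifford.

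The state $C_{\mathbf s}C_e\ket{\Omega}$ is therefore a stabilizer state. Projecting it with the stabilizer projector $\Pi_0^{\mathrm{in}}$ yields either zero or a stabilizer state $\ket{\chi}$, and the norm $\|\Pi_0^{\mathrm{in}}C_{\mathbf s}C_e\ket{\Omega}\|^2\in\{0\}\cup\{2^{-j}\}$ follows from a tableau computation.

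\textbf{Main obstacle: evaluating $\bra{\chi}O_{\mathrm{in}}\ket{\chi}$.} $O_{\mathrm{in}}$ is Clifford but not Pauli, so the standard stabilizer expectation rule does not apply directly. My approach is to write $\bra{\chi}O_{\mathrm{in}}\ket{\chi}$ as the inner product of the two stabilizer states $\ket{\chi}$ and $O_{\mathrm{in}}\ket{\chi}$. This inner product, including its phase, can be computed in $O(n^3)$ time with the Aaronson--Gottesman overlap algorithm with phase tracking, or through the $\omega$-phase representation of \cref{equation: XD2 element}. If $O$ is Hermitian, such as a Pauli observable, the result is real; otherwise I return the complex value. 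Explicit bookkeeping of global phases, as in $[T,X]=e^{-i\pi/4}S$, is needed here.

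\textbf{Step 3: unbiased estimator.} I sample $e\sim p_e$ and $\mathbf s$ from the exact complete-syndrome distribution $q_e(\mathbf s)$ given by \cref{proposition: syndrome sampling} and \cref{equation: affine syndrome distribution}, where the complete syndrome includes the ancilla checks. This distribution equals the squared norm of the projected branch. The returned value is the normalized conditional expectation $\bra{\chi}O_{\mathrm{in}}\ket{\chi}$, whose average over $(e,\mathbf s)$ equals \cref{equation: corrected observable}.

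Each sample costs polynomial time, consisting of back-propagation plus a constant number of tableau operations. For postselection, I discard rejected samples and average over the accepted ones, which is a ratio estimator of the conditioned expectation.
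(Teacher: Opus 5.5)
Your proposal is correct and follows essentially the same route as the paper. You back-propagate the Pauli correction and the observable to get Clifford operators $C_{\mathbf s}$ and $O_{\mathrm{in}}\in\mathcal{XD}^{(2)}_n$, turn each branch into a Pauli postselection followed by a phase-sensitive stabilizer overlap, and sample $(e,\mathbf s)$ from the exact distribution $q_e(\mathbf s)=\|\Pi_0^{\mathrm{in}}C_{\mathbf s}C_e\ket{\Omega}\|^2$, so the normalized conditional values average to \cref{equation: corrected observable}. Your explicit normal-form argument for $O_{\mathrm{in}}\in\mathcal{XD}^{(2)}_n$ and your remark on phase tracking fill in details the paper only asserts; note that Aaronson--Gottesman gives only the magnitude of an overlap, so the phase needs a phase-tracking representation such as the CH form.
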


\begin{proof}
Use the joint data--ancilla setup of the preceding section, with initial
stabilizer state $\ket{\Omega}$. Here the syndrome is complete, and the
Pauli correction satisfies
$Q_{\mathbf{s}}\Pi_{\mathbf{s}}^{\mathrm{out}}
=\Pi_{0}^{\mathrm{out}}Q_{\mathbf{s}}$.
With $C_e$ denoting the propagated error as above, define
$C_{\mathbf{s}}=U^\dagger Q_{\mathbf{s}}U$ and
$O_{\mathrm{in}}=U^\dagger O U$. The probability of outcome $\mathbf{s}$
is $q_e(\mathbf{s})=\|\Pi_{\mathbf{s}}^{\mathrm{out}}
UC_e\ket{\Omega}\|^2$.
For a nonzero-probability branch, the corrected expectation is
\begin{align}
    \langle O\rangle_{e,\mathbf{s}}
    &=\frac{
    \bra{\Omega}C_e^\dagger U^\dagger
    \Pi_{\mathbf{s}}^{\mathrm{out}}Q_{\mathbf{s}}^\dagger
    O Q_{\mathbf{s}}\Pi_{\mathbf{s}}^{\mathrm{out}}
    UC_e\ket{\Omega}}
    {q_e(\mathbf{s})}\nonumber\\
    &=\frac{
    \bra{\Omega}C_e^\dagger C_{\mathbf{s}}^\dagger
    \Pi_{0}^{\mathrm{in}}O_{\mathrm{in}}\Pi_{0}^{\mathrm{in}}
    C_{\mathbf{s}}C_e\ket{\Omega}}
    {\bra{\Omega}C_e^\dagger C_{\mathbf{s}}^\dagger
    \Pi_{0}^{\mathrm{in}}C_{\mathbf{s}}C_e\ket{\Omega}}.
    \label{equation: conditioned observable supplement}
\end{align}
The second line uses the correction identity and
$U^\dagger\Pi_{0}^{\mathrm{out}}U=\Pi_{0}^{\mathrm{in}}$.

Both $C_e$ and $C_{\mathbf{s}}$ are Clifford. Moreover, affine--diagonal
conjugation preserves $\mathcal{XD}^{(2)}_n$, so $O_{\mathrm{in}}$ is also
Clifford. Applying $\Pi_{0}^{\mathrm{in}}$ to
$C_{\mathbf{s}}C_e\ket{\Omega}$ is Pauli postselection on a stabilizer
state, and the resulting expectation is a stabilizer-state overlap.
Thus each conditional value is efficiently computable, with normalization
implicit when using a normalized tableau state.

Sampling $e\sim p_e$ and $\mathbf{s}\sim q_e$ using
\cref{algorithm: syndrome sampling concise}, then averaging
$\langle O\rangle_{e,\mathbf{s}}$, gives an unbiased estimator of
\cref{equation: corrected observable}.
For an accepted set $\mathcal A$ of syndrome outcomes, the conditional
expectation is
\begin{equation}
    \langle O\rangle_{\mathrm{acc}}
    =\frac{\displaystyle\sum_e p_e
    \sum_{\mathbf{s}\in\mathcal A}q_e(\mathbf{s})
    \langle O\rangle_{e,\mathbf{s}}}
    {\displaystyle\sum_e p_e
    \sum_{\mathbf{s}\in\mathcal A}q_e(\mathbf{s})}.
    \label{equation: postselected observable expectation}
\end{equation}
It is estimated by averaging a fixed number of accepted sample values.
Rejection sampling incurs the inverse acceptance probability as an
additional overhead. The complete syndrome is needed for the correction
identity used above.
\end{proof}

\subsection{Computational cost}
\label{subsection: computational cost}

Let $n$ be the total number of simulated data and ancilla qubits, $m$
the number of commuting measurement outcomes sampled over the complete
circuit, and $T$ the circuit depth. We assume $\mathcal O(n)$ physical gate
locations per layer and that each physical gate is supported on a constant
number of qubits. Since each location is then involved in only a constant
number of nonidentity Pauli faults under the standard circuit-level noise
model, the number of elementary fault events is
\begin{equation}
    N_{\mathrm f}=\mathcal O(nT).
    \label{equation: number elementary faults}
\end{equation}
For an $\mathcal{AD}^{(3)}$ circuit, the propagated errors are elements of
$\mathcal{XD}^{(2)}_n$ and can be represented as
\begin{equation}
    E=
    \omega^p
    \left(\prod_i S_i^{s_i}\right)
    \left(\prod_{i<j}\mathrm{CZ}_{ij}^{\Gamma_{ij}}\right)
    X^{\mathbf x},
    \label{equation: XD2 computational representation}
\end{equation}
where $\omega=e^{i\pi/4}$, $p\in\mathbb Z_8$,
$\mathbf x\in\mathbb Z_2^n$, $\mathbf s\in\mathbb Z_4^n$, and
$\Gamma$ is a binary graph specifying the CZ factors. Here $S_i$ denotes
the single-qubit phase gate on qubit $i$.
This representation requires $\mathcal O(n+|E(\Gamma)|)$ storage and at
most $\mathcal O(n^2)$ storage in the dense case. Updating a dense
representation through one depth-one circuit layer costs
$\mathcal O(n^2)$.

For each measured stabilizer operator $M_i$, let
$M_i^{\text{ini}}=U^\dagger M_iU$, where $U$ is the ideal circuit.
For each elementary fault event $\lambda$, let $E_\lambda$ be the
corresponding Pauli fault propagated to the input, and define
\[
    K_{\lambda i}=[E_\lambda^\dagger,M_i^{\text{ini}}],
\]
where $[A,B]=ABA^{-1}B^{-1}$ is the group commutator.
Although there are $N_{\mathrm f}m$ possible pairs, we assume that each
measurement has a local backward light cone: for
\[
    \Lambda_i=\{\lambda:K_{\lambda i}\neq I\},
\]
assume that each detector depends on faults within a
constant-depth interval, as in repeated syndrome extraction
with detectors formed from consecutive measurements of the
same conserved check. Since each layer contains
$\mathcal O(n)$ fault locations, each detector has nontrivial
commutators with at most $\mathcal O(n)$ elementary faults. Hence the total number of
nontrivial fault--measurement pairs is
\begin{equation}
    L:=\sum_{i=1}^m|\Lambda_i|=\mathcal O(mn).
    \label{equation: nontrivial commutator count}
\end{equation}
This locality assumption is separate from the affine--diagonal group
structure.

As a one-time preprocessing step, we back-propagate all $m$ measured
operators, all $N_{\mathrm f}=\mathcal O(nT)$ elementary faults, and the
single-qubit Pauli generators used to construct output corrections when
decoding. This costs $\mathcal O[(m+nT)n^2T]$. For each of the
$L=\mathcal O(mn)$ nontrivial pairs, we compute and store the phase and
$Z$-type Pauli label of $K_{\lambda i}$, together with its syndrome relative
to the initial stabilizer state. In a dense representation this costs
$\mathcal O(Ln^2)=\mathcal O(mn^3)$. Thus
\begin{equation}
    C_{\mathrm{pre}}
    =\mathcal O\!\left[(m+nT)n^2T+mn^3\right].
    \label{equation: local preprocessing complexity}
\end{equation}

Consider now a sampled trajectory $a$ containing $w_a$ faults.
The precomputed representations of these faults can be composed in
$\mathcal O(w_an^2)$ time to obtain the effective input error $E_a$.
Let $R_a$ denote the number of nontrivial fault--measurement pairs for this
trajectory. The corresponding $Z$-type commutators and their initial
syndromes are combined in $\mathcal O(R_an)$ time. By
\cref{equation: nontrivial commutator count},
$R_a\leq L=\mathcal O(mn)$.

For this fixed trajectory $a$, let $S_1,\ldots,S_n$ generate the initial
pure stabilizer state $\ket{\Omega}$. In the affine--diagonal frame, write
\begin{equation}
\begin{aligned}
    K_{a,i}
    &:=[E_a^\dagger,M_i^{\text{ini}}]
      =i^{p_i}Z^{\mathbf z_i},
\end{aligned}
\label{equation: trajectory row operators}
\end{equation}
where $D_i$ is diagonal Clifford,
$\mathbf x_i,\mathbf z_i\in\mathbb F_2^n$, and $p_i\in\mathbb Z_4$.
$K_{a,i}$ is its group commutator with the adjoint of the propagated error,
in the order shown in \cref{equation: trajectory row operators}.
Unlike $K_{\lambda i}$, $K_{a,i}$ refers to the complete sampled trajectory.
Define the binary matrix
\begin{equation}
    B_{ij}=\langle K_{a,i},S_j\rangle_{\mathrm{sp}},
    \qquad B\in\mathbb F_2^{m\times n},
    \label{equation: trajectory syndrome matrix}
\end{equation}
whose entries are zero for commuting pairs and one for anticommuting pairs.
Here $S_j$ denotes an initial-state stabilizer generator, rather than the
phase gate appearing in \cref{equation: XD2 computational representation}.

Let the rows of $G$ form a basis of $\ker B^T$. Each row labels a product
of measured checks whose outcome is deterministic. Consequently, the
syndrome bits, defined by assigning outcome $(-1)^{s_i}$ to $M_i$, satisfy
the binary parity equations
\begin{equation}
    G\mathbf s=\boldsymbol\alpha,
    \label{equation: deterministic syndrome parity constraints}
\end{equation}
where each entry $\alpha_\ell\in\mathbb F_2$ is calculated by evaluating
\begin{equation}
    (-1)^{\alpha_\ell}
    =\bra{\Omega}
      \left[
          E_a^\dagger,
          \prod_i (M_i^{\text{ini}})^{G_{\ell i}}
      \right]
      \ket{\Omega}.
    \label{equation: deterministic syndrome parity signs}
\end{equation}
The expectation is evaluated on the initial stabilizer state:
$\alpha_\ell=0$ if it equals $+1$, and $\alpha_\ell=1$ if it equals $-1$.
These parity equations define the affine subspace from which the syndrome is
sampled uniformly.

To determine these parity relations and their signs together, Gaussian
elimination on $B$ tracks $(\mathbf x_i,\mathbf z_i,p_i)$ for each current
row. Adding row $j$ to row $i$ corresponds to replacing
$M_i^{\text{ini}}$ by $M_i^{\text{ini}}M_j^{\text{ini}}$.
The associated commutator satisfies
\begin{equation}
\begin{aligned}
    [E_a^\dagger,M_i^{\text{ini}}M_j^{\text{ini}}]
    &=K_{a,i}M_i^{\text{ini}}K_{a,j}
      (M_i^{\text{ini}})^\dagger\\
    &=i^{p_i+p_j+2\mathbf x_i\cdot\mathbf z_j}
      Z^{\mathbf z_i+\mathbf z_j}.
\end{aligned}
\label{equation: elimination commutator update}
\end{equation}
Accordingly, the row labels update as
$\mathbf x_i\leftarrow\mathbf x_i+\mathbf x_j$,
$\mathbf z_i\leftarrow\mathbf z_i+\mathbf z_j$, and
$p_i\leftarrow p_i+p_j+2\mathbf x_i\cdot\mathbf z_j$ modulo four,
using the old values on the right-hand sides. Each row addition therefore costs $\mathcal O(n)$, and elimination
requires at most $\mathcal O(mn)$ such additions.
For each resulting zero row, its deterministic sign is evaluated from the
tracked commutator and the initial stabilizer tableau. The row operations
are recorded and reversed to recover the original syndrome from the
sampled transformed outcomes. No explicit $m\times m$ row-transformation matrix is needed. The total cost of elimination, sign evaluation, and sampling is therefore $\mathcal O(mn^2)$.

Including the $\mathcal O(nT)$ cost of sampling the local noise channels,
one exact syndrome sample costs
\begin{equation}
    C_{\mathrm{synd}}(a)
    =\mathcal O\!\left(nT+w_an^2+mn^2\right).
    \label{equation: local syndrome sampling complexity}
\end{equation}
For local circuit-level noise, $w_a=\mathcal O(nT)$.
Assuming $m=\mathcal O(nT)$ measured outcomes, the syndrome
sampling cost above is therefore $\mathcal O(n^3T)$ per sample
after preprocessing. For fidelity or observable estimation,
constructing the back-propagated Pauli correction and evaluating
the final stabilizer projection or overlap takes
$\mathcal O(n^3)$ time. Thus one complete Monte Carlo
trajectory costs
\begin{equation}
    C_{\mathrm{traj}}(a)=\mathcal O(n^3T).
    \label{equation: local fidelity trajectory complexity}
\end{equation}
These bounds exclude the runtime of the decoder and any postselection rule.
\subsection{Cost of [[8,3,2]] hIQP simulation subroutines}

\begin{figure}[H]
  \centering
  \includegraphics[width=0.8\textwidth]{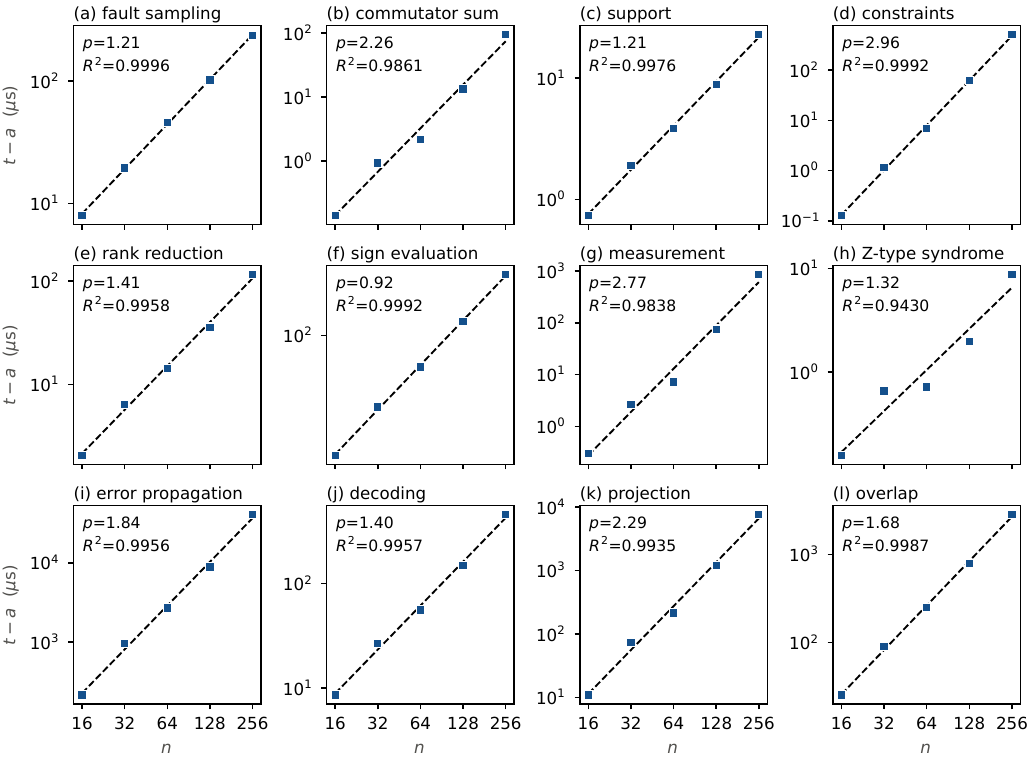}
\caption{Cost of each stage of producing one Monte-Carlo sample, as a function of
    the number of physical qubits $n$, measured on a single CPU for the
    $[[8,3,2]]$ circuit family. Points show the measured time with each stage's
    fixed per-call overhead $a$ subtracted, so that a power law
    $t = a + c\,n^{p}$ appears as a straight line on log--log axes (dashed). The
    fitted exponents range from $p = 0.92$ in (f) to $p = 2.96$ in (d), with
    $R^{2} \ge 0.943$ throughout. Stages (a)--(h) constitute the syndrome
    sampling and (i)--(l) the fidelity evaluation, which accounts for $96\%$ of
    the cost at $n = 256$. The total cost of a sample is the sum of these twelve
    power laws, which is why it is not itself a power law: the stages with the
    smallest exponents dominate at small $n$, the steepest ones at large $n$.
    Each stage is described in \cref{tab:stages}.}

  \label{fig:steps}
\end{figure}

The cost of the simulation separates into a one-time precomputation and a
per-sample Monte-Carlo cost, shown in Fig.~\ref{fig:runtime}. The latter is not a
single power law, its effective slope drifting upward with the system size. To
identify what controls that scaling we time each stage of producing one sample
separately, in Fig.~\ref{fig:steps}: every stage is individually a power law in
$n$, and the total bends only because their exponents differ, so that different
stages dominate at different sizes. Table~\ref{tab:stages} lists the stages.

\begin{table}[H]
\caption{The stages of producing one Monte-Carlo sample and calculating the fidelity, corresponding to the
panels of Fig.~\ref{fig:steps}. $p$ is the fitted exponent of $t = a + c\,n^{p}$
and the last column is the stage's share of the per-sample cost at $n = 256$.
Here $n$ is the number of physical qubits, $m$ the number of measured
stabilizers, and $w$ the number of faults occurring in a given shot.}
\label{tab:stages}
\begin{ruledtabular}
\begin{tabular}{lp{0.52\textwidth}cc}
Panel & Operation & $p$ & Share \\
\colrule
\multicolumn{4}{l}{\textit{Syndrome sampling}} \\
(a) fault sampling   & Draw which faults occur, independently from each elementary noise channel. & 1.21 & $<1\%$ \\
(b) commutator sum   & Accumulate the precomputed commutators of the sampled faults with each measured stabilizer, back-propagated to the circuit input. & 2.26 & $<1\%$ \\
(c) support          & Locate the qubits on which each accumulated commutator acts. & 1.21 & $<1\%$ \\
(d) constraints      & Project the commutators onto the destabilizer basis, giving the linear constraints the syndrome must satisfy. & 2.96 & $1\%$ \\
(e) rank reduction   & Eliminate to find the linearly dependent combinations of measured stabilizers, whose outcomes are deterministic. & 1.41 & $<1\%$ \\
(f) sign evaluation  & Evaluate the signs of those deterministic combinations. & 0.92 & $<1\%$ \\
(g) measurement      & Sample the remaining, non-deterministic outcomes on the stabilizer tableau, one measurement per fault--stabilizer pair with a nontrivial commutator. & 2.77 & $2\%$ \\
(h) Z-type syndrome  & Combine the accumulated commutators with the $Z$-type stabilizers. & 1.32 & $<1\%$ \\
\colrule
\multicolumn{4}{l}{\textit{Fidelity evaluation}} \\
(i) error propagation & Compose the sampled faults into a single effective error at the circuit input and apply it to the stabilizer tableau. & 1.84 & $76\%$ \\
(j) decoding          & Decode the sampled syndrome and apply the resulting Pauli correction. & 1.40 & $1\%$ \\
(k) projection        & Project the corrected state onto the measured stabilizers and reduce it to a canonical stabilizer set. & 2.29 & $14\%$ \\
(l) overlap           & Evaluate the fidelity of the resulting stabilizer state with the ideal one. & 1.68 & $5\%$ \\
\end{tabular}
\end{ruledtabular}
\end{table}

Two of the exponents follow directly from operation counts. Stage~(a) performs one
draw per elementary fault location. Stage~(d) multiplies an $m \times n$ matrix by an
$n \times n/2$ one, where $m$ is the number of measured stabilizers. For this code
family $m = 5n/8$, so the product costs $\Theta(n^{3})$, and the measured
$p = 2.96$ confirms the $m n^{2}$ term of the cost model.

The remaining stages are cheaper than their nominal operation counts, because at the
error rates of interest only $w \approx 0.2$--$6$ faults occur per shot, so the
intermediate objects they act on are almost entirely zero and the corresponding
arithmetic is never executed. In stage~(e), for example, the constraint matrix has one
row per measured $X$-type stabilizer, and a fault anticommutes with only a few of
them: at $n = 256$ half of its $32$ rows vanish identically and its rank is $15$, so
the elimination performs $29$ row operations rather than the $\sim\!10^{3}$ that a
generic matrix of that size would require. The exponents in Table~\ref{tab:stages}
therefore characterise the cost in the regime of physical interest, where the fault
density is low, rather than a worst case.

\section{Benchmarking state/gate teleportation circuits}
\label{sec:teleportation}
\cref{sec:syndrome_sampling} and \cref{sec:efficient_benchmark} study efficient simulations of Clifford-sandwiched affine diagonal circuits, where the circuits can be written as a unitary with measurements at the end of the circuit, including circuits with mid-circuit syndrome extractions. Specifically, given a sampling error configuration, we can efficiently sample syndromes and compute the fidelity conditioned on the syndrome. In this section, we further show that the algorithm we developed can be extended to encoded circuits with state and gate teleportation, where real-time decoding and logical feedforward are involved.

The apparent difficulty with feedforward is that some circuit components depend on certain mid-circuit measurement results, some of which are non-deterministic, e.g., logical measurements. This does not satisfy the circuit-structure condition specified in \cref{sec:syndrome_sampling} and \cref{sec:logical_fidelity} for syndrome sampling and fidelity estimation. However, we show in this section that we could rewrite the circuit into a unitary circuit such that the syndrome sampling and logical fidelity can still be done efficiently without the need to simulate any non-deterministic measurement result.

\subsection{Unitary rewrite}
\begin{figure}[h]
  \centering
  \includegraphics[width=0.5\columnwidth]{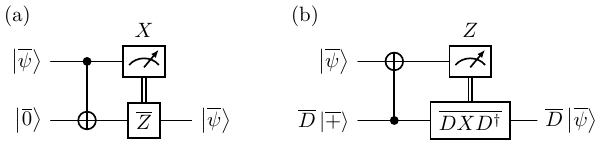}
  \caption{Logical teleportation circuits. Each horizontal line represents an
    entire code block, and all gates and measurements act on the logical level.
    (a)~State teleportation. The input block $\ket{\overline{\psi}}$ controls a
    logical CNOT onto an output block prepared in $\ket{\overline{0}}$. The
    input block is then measured in the logical $X$ basis, and the outcome
    conditions a logical $\overline{Z}$ correction (classical feedforward,
    double line), leaving the output block in $\ket{\overline{\psi}}$.
    (b)~Gate teleportation of a diagonal gate $\overline{D}$. An output block
    prepared in the magic state $\overline{D}\ket{\overline{+}}$ controls a
    logical CNOT onto the input block $\ket{\overline{\psi}}$, which is then
    measured in the logical $Z$ basis. Conditioned on the outcome, the
    correction $\overline{DXD^{\dagger}}$ is applied to the output block,
    leaving it in $\overline{D}\ket{\overline{\psi}}$. For $D$ in the third
    level of the Clifford hierarchy (e.g., $T$ or $\mathrm{CCZ}$), this
    correction is a Clifford operation.}
  \label{fig:teleportation}
\end{figure}

We introduce the basic setting, idea, and notation here. We focus on CSS codes and consider circuits with gate teleportations that implement a logical gate in the $\mathcal{AD}^{(3)}$ group based on the one-bit teleportation circuit. For simplicity, we assume the feedforward operation is ideal or has an identical error rate when the feedforward is not applied and leave the comprehensive analysis incorporating the feedforward error channel to future work. This error-model assumption on feedforward may be justified when it is implemented virtually, or absorbed by subsequent operations \cite{chamberland2018fault, litinski2019game,fowler2012time}.

Consider a single gadget of state teleportation or gate teleportation illustrated in \cref{fig:teleportation}. Let $\ket{\Omega}$ be the initial joint stabilizer state, and let all the unitary gates,
including the magic-resource preparation and the logical CNOT gate of the teleportation gadget, be $U_0$. If $U_0$ is in $\mathcal{AD}^{(3)}$, Then the noisy $U_0$ with Pauli error configuration $a$ can be written as $U_{0,a}=U_0C_a$ with $C_a$ Clifford. We denote the input register and output register as $\mathbb{A}$ and $\mathbb{B}$ respectively.

Let $Q\in\{X,Z\}$ denote the transversal readout basis on $\mathbb A$.
For the physical outcome string $x$, write the extracted syndrome
$s=H_Q x$ and raw logical bit string $l=L_Q x$ for the code syndrome and raw logical
parities, respectively. The rows of $H_Q$ and $L_Q$ specify the
measured code checks and physical logical Pauli representatives.
Let $R_l$ be the physical Pauli or Clifford correction associated with $l$.
Define
\begin{equation}
    F_R=\sum_{l\in\mathbb F_2^k}
       \Pi_l^{Q,L,\mathbb A}\otimes R_l,
    \qquad
    \Pi_l^{Q,L,\mathbb A}
    =\prod_{j=1}^k
      \frac{I+(-1)^{l_j}\overline{Q}^{\mathbb A}_j}{2}.
    \label{eq:gate_FR}
\end{equation}
This corresponds to the operation after backpropagating the classical-controlled feedforward operator before the logical measurement without decoding. Let $\{\ket{x}\}$ be the physical basis states in the basis $Q$. We have,
\begin{equation}
    (\bra{x}_{\mathbb A}\otimes I_{\mathbb B})F_R
    =
    \bra{x}_{\mathbb A}\otimes R_{l},
    \label{eq:gate_FR_property}
\end{equation}
which reproduces the raw feedforward for every physical outcome. With the decoded correction $d_L(s)$ on the logical outcome, the error-corrected logical outcome becomes
$\widehat l(x)=l(x)+d_L(s)$, and the implemented feedforward satisfies
$R_{\widehat l} = R_{d_L(s)}R_l$.  Note that $R_{d_L(s)}$ depends only on $s$, or more generally, all previous syndromes.

Setting $W=F_RU_0$, the unnormalized noisy output becomes
\begin{equation}
\begin{aligned}
    \ket{v_{a,x}}_{\mathbb B}
    &=(\bra{x}_{\mathbb A}\otimes R_{\widehat l(x)})
      U_{0,a}\ket{\Omega}\\
    &=(\bra{x}_{\mathbb A}\otimes R_{d_L(s)})
      W C_a\ket{\Omega}.
\end{aligned}
\label{eq:gate_noisy_rewrite}
\end{equation}
Crucially, the noisy state now only depends on the syndrome information, which can be simulated efficiently (see \cref{sec:syndrome_sampling_feedforward}), independent of the logical outcome.

\begin{lemma}\label{lemma:F_R_AD^{(3)}}
    For CSS code gate teleportation with $Q=Z$ type control and feedforward $R_{l} = \overline{D}\overline{X}^l\overline{D}^{\dagger}\in\mathcal{XD}^{(2)}$, then $F_R\in\mathcal{AD}^{(3)}$.
\end{lemma}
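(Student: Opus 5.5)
The plan is to exhibit $F_R$ explicitly as a product of an affine circuit and diagonal level-three gates on the joint register $\mathbb A\mathbb B$, and then use the group structure $\mathcal{AD}^{(3)}_n=\mathcal D^{(3)}_n\rtimes\mathcal A_n$ from \cref{proposition: normal form}. I assume, as in the gadget of \cref{fig:teleportation}, that $\overline D$ is realised physically by a diagonal circuit $D_{\mathbb B}\in\mathcal D^{(3)}$ acting on $\mathbb B$. This is the setting in which $R_l\in\mathcal{XD}^{(2)}$ arises by \cref{proposition: propagation}. I also assume that $\overline X^l$ is realised by an $X$-type physical string $X^{L_X^T l}$ on $\mathbb B$, which is automatic for a CSS code. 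With these representatives,
\begin{equation}
F_R=(I_{\mathbb A}\otimes D_{\mathbb B})\Bigl(\sum_{l}\Pi_l^{Z,L,\mathbb A}\otimes X_{\mathbb B}^{L_X^Tl}\Bigr)(I_{\mathbb A}\otimes D_{\mathbb B}^\dagger),
\end{equation}
since $I\otimes D$ commutes with the projectors on $\mathbb A$.

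The key step is to identify the middle factor $A:=\sum_l\Pi_l^{Z,L,\mathbb A}\otimes X_{\mathbb B}^{L_X^Tl}$ as an element of $\mathcal A_n$.

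1. Because the code is CSS and $Q=Z$, each logical $\overline Z_j^{\mathbb A}$ is a physical $Z$-string $Z^{(L_Z)_j}$. Hence $\Pi_l^{Z,L,\mathbb A}$ is diagonal in the computational basis of $\mathbb A$ and equals $\sum_{x:\,L_Zx=l}\ket{x}\!\bra{x}$.
2. It follows that $A\ket{x}_{\mathbb A}\ket{y}_{\mathbb B}=\ket{x}\ket{y+L_X^TL_Zx}$.
3. This is the linear map with block matrix $\bigl(\begin{smallmatrix}I&0\\ L_X^TL_Z&I\end{smallmatrix}\bigr)\in\mathrm{GL}(n,2)$. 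It is implemented by a network of CNOTs, one from each $\mathbb A$ qubit $i$ to each $\mathbb B$ qubit $k$ with $(L_X^TL_Z)_{ki}=1$.

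Therefore $A\in\mathcal A_n$.

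To conclude, $F_R=D_{\mathbb B}AD_{\mathbb B}^\dagger$ is a product of generators of $\mathcal{AD}^{(3)}$, so $F_R\in\mathcal{AD}^{(3)}$. To make the normal form explicit, I will also write
\begin{equation}
F_R=\bigl(D_{\mathbb B}\,A D_{\mathbb B}^\dagger A^\dagger\bigr)A.
\end{equation}
Here $AD_{\mathbb B}^\dagger A^\dagger\in\mathcal D^{(3)}$ because $\mathcal A_n$ normalizes $\mathcal D^{(3)}_n$ (\cref{proposition: normal form}). The bracket is then a product of two elements of the abelian group $\mathcal D^{(3)}_n$. As a consistency check, this bracket acts on the sector $L_Zx=l$ as $D X^lD^\dagger X^l=[D,X^l]$, which reproduces $R_l=[D,X^l]X^l$.

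I expect the main obstacle to be the passage from logical to physical operators. The hypothesis only states $R_l=\overline D\,\overline X^l\overline D^\dagger\in\mathcal{XD}^{(2)}$, so I must fix physical representatives whose products act correctly off the code space as well. I will handle this by defining $R_l$ through the chosen physical $D_{\mathbb B}$ and $X$-type logical strings, as the statement implicitly does. Different representatives of $\overline X^l$ differ by $X$-stabilizers, which only changes the CNOT pattern within $\mathcal A_n$, so the conclusion does not depend on the choice.
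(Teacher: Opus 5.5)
Your argument is correct in the setting you assume, but it takes a different route from the paper. It also proves a somewhat narrower statement.

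\textbf{The paper's route.} The paper never uses a physical realization of $\overline D$. It conjugates by a CNOT network on $\mathbb A$ that sends each $\overline Z^{\mathbb A}_j$ to a single-qubit $Z_{q_j}$. It then applies the controlled corrections $\ket{0}\!\bra{0}_{q_i}\otimes I+\ket{1}\!\bra{1}_{q_i}\otimes R_i$. Controlling the generators $X$, $S$, $\mathrm{CZ}$ (and the phase $\omega^p$) of $\mathcal{XD}^{(2)}$ gives CNOT, $\mathrm{CS}$, $\mathrm{CCZ}$ (and $T^p$), all in $\mathcal{AD}^{(3)}$.

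\textbf{Your route.} You strip off the diagonal and observe that the raw feedforward $A=\sum_l\Pi_l^{Z,L,\mathbb A}\otimes X_{\mathbb B}^{L_X^Tl}$ is already a CNOT network, so $F_R=D_{\mathbb B}AD_{\mathbb B}^\dagger$. This needs no decoding network and gives the normal form explicitly. It also supplies automatically the multiplicativity $R_l=\prod_iR_i^{l_i}$, which the paper's product of single controlled-$R_i$ gates tacitly uses.

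\textbf{What it costs.} You need the physical corrections to be conjugates $R_l=D_{\mathbb B}X^{L_X^Tl}D_{\mathbb B}^\dagger$ by one fixed level-three operator. Under that assumption the lemma's hypothesis $R_l\in\mathcal{XD}^{(2)}$ becomes automatic. The paper's proof uses only that hypothesis. It therefore also covers gadgets where the magic state is prepared some other way (e.g.\ by an encoding circuit) and $\overline D$ has no physical diagonal realization on $\mathbb B$.

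Your algebra never actually uses that $D_{\mathbb B}$ preserves the code space or is diagonal. Any $V\in\mathcal{AD}^{(3)}$ with $R_l=VX^{L_X^Tl}V^\dagger$ for all $l$ would work. For a general family $R_l\in\mathcal{XD}^{(2)}$, however, you would still have to show that such a $V$ exists. The controlled-$R_i$ construction avoids that step.
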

\begin{proof}
    Choose a CNOT decoding network that simultaneously maps the chosen
physical logical operators $\overline Z_{\mathbb A,j}$ to
single-qubit operators $Z_{q_j}$. Apply the corresponding controlled
corrections 
\begin{equation}
    C_{q_i}R_{i} = \ketbra{0}{0}_{q_i}\otimes I + \ketbra{1}{1}_{q_i}\otimes R_i
\end{equation}
where $R_i = \overline{D}\overline{X}_i\overline{D}^{\dagger}$, and undo the decoding network..
Since, $R_i\in \mathcal{XD}^{(2)}$ which can be generated by $X$, $S$, and $CZ$, $C_{q_i}R_{i}$ is generated by $\text{CNOT}$, $CS$, and $CCZ$ and therefore is in the $\mathcal{AD}^{(3)}$, and stays in $\mathcal{AD}^{(3)}$ when conjugated by CNOT gates.
\end{proof}

For the purpose of fidelity calculation, we look at the ideal state after the unitary rewrite, which is of the form
\begin{equation}
    W\ket{\Omega}
    =\ket{\eta}_{\mathbb A}\otimes
      \ket{\overline{\phi}}_{\mathbb B},
    \label{eq:gate_reference}
\end{equation}
where $\ket{\eta}$ is a known pure stabilizer state ($\ket{\overline{0}}$ for state teleportation in \cref{fig:teleportation}(a), and $\ket{\overline{+}}$ for gate teleportation in \cref{fig:teleportation}(b)) and
$\ket{\overline{\phi}}$ is the desired output.

\subsection{Syndrome sampling and logical fidelity with feedforward}\label{sec:syndrome_sampling_feedforward}
The syndrome sampling algorithm requires backpropagation of the errors and measured stabilizer elements in the circuit. However, the circuit itself might depend on the feedforward operation. The syndrome algorithm introduced in \cref{sec:syndrome_sampling} can be modified to sample sequentially according to the circuit layers and sample the subsequent syndrome conditioned on the past syndrome history, taking the decoder correction $R_{d_L(s)}$ into account.

After the unitary rewrite, the ideal output of a teleportation gadget is of the form in \cref{eq:gate_reference}, where it is a tensor product with a known input code stabilizer state that depends on the type of teleportation gadget. Therefore, we could perform a virtual recovery on the input register of each teleportation gadget. The overlap then becomes the fidelity of the output register.

More concretely, after each teleportation gadget for the unitary rewrite, complete the measured $Q$ basis code stabilizer on $\mathbb A$ to the
full stabilizer group of $\ket{\eta}$ in \cref{eq:gate_reference}, and denote the additional
virtual outcomes by $r$. For state teleportation,
$\ket{\eta}=\ket{\overline{0}^{\,k}}$ and
$r=(s_Z,l_Z)$. For the diagonal-gate teleportation considered above,
$\ket{\eta}=\ket{\overline{+}^{\,k}}$ and
$r=(s_X,l_X)$.
These additional logical outcomes are virtual and are distinct from
the original raw logical outcome $l=L_Qx$.

Let $\Pi^{\mathbb A}_{s,r}$ be the resulting rank-one joint
measurement projectors. They satisfy
\begin{equation}
    \sum_r\Pi^{\mathbb A}_{s,r}
    =\Pi_s^{Q,\mathbb A}
    =\sum_{x:H_Qx=s}\ketbra{x}_{\mathbb A}.
    \label{eq:virtual_readout_completeness}
\end{equation}
In the ideal circuit, all checks defining these projectors have
eigenvalue $+1$. For each sampled $(s,r)$, choose a source-only
Pauli correction $P^{\mathbb A}_{s,r}$ restoring the measured state
to $\ket{\eta}$, so that
\begin{equation}
    P^{\mathbb A}_{s,r}\Pi^{\mathbb A}_{s,r}
    =\ketbra{\eta}_{\mathbb A}P^{\mathbb A}_{s,r}.
\end{equation}
The virtual measurement replaces the unused part of the physical
readout. Its additional outcomes $r$ are averaged, not postselected
or used to improve the data recovery.

We in addition perform a perfect final syndrome measurement on the final output code blocks $\mathbb B$
after the residual feedforward $R_{d_L(s)}$.
Let $s_{\mathbb B}$ denote this syndrome,
$\Pi^{\mathbb B}_{s_{\mathbb B}}$ its projector, and
$P^{\mathbb B}_{s_{\mathbb B}}$ the prescribed Pauli recovery, with
\[
    P^{\mathbb B}_{s_{\mathbb B}}
    \Pi^{\mathbb B}_{s_{\mathbb B}}
    =\Pi^{\mathbb B}_0P^{\mathbb B}_{s_{\mathbb B}},
\]
where $\Pi^{\mathbb B}_0$ is the destination code-space projector.
Define the combined Kraus operator
\begin{equation}
    K_{s,r,s_{\mathbb B}}
    =
    P^{\mathbb A}_{s,r}\Pi^{\mathbb A}_{s,r}
    \otimes
    P^{\mathbb B}_{s_{\mathbb B}}
    \Pi^{\mathbb B}_{s_{\mathbb B}}R_{d_L(s)}.
    \label{eq:virtual_recovery_kraus}
\end{equation}
For a fixed fault configuration $a$, write
\[
    \rho'_a
    :=WC_a\ketbra{\Omega}C_a^\dagger W^\dagger,
    \qquad
    \tau^{\mathbb B}_{a,s}
    :=\sum_{x:H_Qx=s}\ketbra{v_{a,x}}_{\mathbb B},
\]
where $\ket{v_{a,x}}$ is the unnormalized physical output defined
in \cref{eq:gate_noisy_rewrite}, already including the decoded
feedforward.
The completeness relation \cref{eq:virtual_readout_completeness}
and basis independence of the partial trace imply
\begin{equation}
\begin{aligned}
    &\sum_r K_{s,r,s_{\mathbb B}}\rho'_a
        K_{s,r,s_{\mathbb B}}^\dagger\\
    &\quad=
    \ketbra{\eta}_{\mathbb A}\otimes
    P^{\mathbb B}_{s_{\mathbb B}}
    \Pi^{\mathbb B}_{s_{\mathbb B}}
    \tau^{\mathbb B}_{a,s}
    \Pi^{\mathbb B}_{s_{\mathbb B}}
    (P^{\mathbb B}_{s_{\mathbb B}})^\dagger.
\end{aligned}
\label{eq:virtual_recovery_equivalence}
\end{equation}
Thus, after averaging $r$, the restored joint state contains exactly
the actual recovered state on $\mathbb B$, conditioned on the same
physical syndromes. This identity holds even when $\rho'_a$ is
entangled.

Let the probability of the retained syndromes at fixed $a$ be
\begin{equation}
    p_a(s,s_{\mathbb B})
    :=
    \sum_r
    \left\|
        K_{s,r,s_{\mathbb B}}WC_a\ket{\Omega}
    \right\|^2.
\end{equation}
The conditional logical fidelity is
\begin{equation}
\begin{aligned}
    F_{a\mid s,s_{\mathbb B}}
    &=
    \frac{1}{p_a(s,s_{\mathbb B})}
    \sum_r
    \left|
        \bra{\Phi}
        K_{s,r,s_{\mathbb B}}WC_a
        \ket{\Omega}
    \right|^2\\
    &=
    \frac{1}{p_a(s,s_{\mathbb B})}
    \sum_r
    \left|
        \bra{\Phi}
        \left(
            P^{\mathbb A}_{s,r}\otimes
            P^{\mathbb B}_{s_{\mathbb B}}R_{d_L(s)}
        \right)
        WC_a\ket{\Omega}
    \right|^2\\
    &=
    \frac{1}{p_a(s,s_{\mathbb B})}
    \sum_r
    \left|
        \bra{\Omega}
        C_{s,r,s_{\mathbb B}}C_a
        \ket{\Omega}
    \right|^2,
\end{aligned}
\label{eq:teleportation_conditional_fidelity}
\end{equation}
where $\ket{\Phi}=W\ket{\Omega}$ is the target state defined in \cref{eq:gate_reference} and
\begin{equation}
\begin{aligned}
    C_{s,r,s_{\mathbb B}}
    &:=
    W^\dagger
    \left(
        P^{\mathbb A}_{s,r}\otimes
        P^{\mathbb B}_{s_{\mathbb B}}R_{d_L(s)}
    \right)W\\
    &=
    U_0^\dagger F_R^\dagger
    \left(
        P^{\mathbb A}_{s,r}\otimes
        P^{\mathbb B}_{s_{\mathbb B}}R_{d_L(s)}
    \right)F_RU_0.
\end{aligned}
\label{eq:teleportation_correction_backpropagation}
\end{equation}
The second equality in
\cref{eq:teleportation_conditional_fidelity} moves the syndrome
projectors through their respective Pauli recoveries and absorbs
the resulting reference projectors into $\bra{\Phi}$.
The final equality uses $\bra{\Phi}=\bra{\Omega}W^\dagger$. For state teleportation, the combined correction $P^{\mathbb A}_{s,r}\otimes
        P^{\mathbb B}_{s_{\mathbb B}}R_{d_L(s)}$ is Pauli.
For the gate-teleportation construction above, it belongs to
$\mathcal{XD}^{(2)}$, since $R_{d_L(s)}\in\mathcal{XD}^{(2)}$.
$W\in\mathcal{AD}^{(3)}$ according to \cref{lemma:F_R_AD^{(3)}}, and therefore
conjugation by $W$ preserves $\mathcal{XD}^{(2)}$.
Hence $C_{s,r,s_{\mathbb B}}$ and $C_a$ are Clifford, and every
squared overlap in
\cref{eq:teleportation_conditional_fidelity}
is efficiently evaluable using stabilizer-state methods. The summation here is over exponentially many $r$s, but this can be estimated using Monte Carlo sampling.

\section{Magic state preparation using Tricycle codes}

\subsection{The $[[27,3,3]]$ code}
\label{sec:tricycle_code}

Tricycle codes are CSS codes constructed from three elements
$a,b,c$ of the group algebra $R=\mathbb{F}_2[G]$ of a finite
Abelian group $G$~\cite{menon2026magic}.
An element in the group algebra can be represented by $h=\sum_{g\in G}h_g g$, with $h_g\in\mathbb{F}_2$,
is a binary sum of group elements. Addition is coefficientwise,
and multiplication extends the group operation bilinearly.
For the code studied here, $G=\mathbb{Z}_3\times\mathbb{Z}_3$
has commuting generators $x,y$ and order $n_G=9$, giving
\begin{equation}
\label{eq:group_algebra}
R\simeq\mathbb{F}_2[x,y]/\langle x^3-1,y^3-1\rangle.
\end{equation}
Thus group-algebra elements are bivariate polynomials with binary
coefficients and exponents reduced modulo three.
We order the group basis by $x^u y^v\mapsto 3u+v$,
with $u,v\in\{0,1,2\}$.

The binary matrix associated with $h\in R$ is its left regular
representation $\rho(h)$, which represents multiplication by $h$
in the group basis:
\begin{equation}
\label{eq:regular_representation}
[\rho(h)]_{g,g'}=h_{g(g')^{-1}},
\qquad g,g'\in G.
\end{equation}
Each monomial is represented by a permutation matrix,
and polynomial sums by binary sums of these matrices.
Writing $\mathbf{A}=\rho(a)$, $\mathbf{B}=\rho(b)$, and
$\mathbf{C}=\rho(c)$, the parity-check matrices for the tricycle code is defined as
\begin{equation}
\label{eq:HXHZ}
\begin{aligned}
H_X&=\begin{bmatrix}
\mathbf{A}^{\mathsf T}&\mathbf{B}^{\mathsf T}&\mathbf{C}^{\mathsf T}
\end{bmatrix},\\
H_Z&=\begin{bmatrix}
\mathbf{B}&\mathbf{A}&0\\
\mathbf{C}&0&\mathbf{A}\\
0&\mathbf{C}&\mathbf{B}
\end{bmatrix}.
\end{aligned}
\end{equation}
Their rows specify the supports of $X$- and $Z$-type stabilizers.
Because $R$ is commutative, $\mathbf{A}$, $\mathbf{B}$, and
$\mathbf{C}$ commute, implying $H_XH_Z^{\mathsf T}=0$.
This construction is the three-fold balanced product over $R$ of the
classical two-term complexes $R\xrightarrow{a}R$,
$R\xrightarrow{b}R$, and $R\xrightarrow{c}R$~\cite{menon2026magic}.
The set of qubits comprises three sectors of $n_G$ qubits each.

We choose the binomials
\begin{equation}
\label{eq:polys}
a=1+x,\qquad b=1+y,\qquad c=x+y,
\end{equation}
yielding the $[[27,3,3]]$ bivariate tricycle
code~\cite{li2025transversal}.
Sector $i\in\{1,2,3\}$ corresponds to $a,b,c$, respectively,
and its qubit labeled by $x^u y^v$ has local index
$9(i-1)+3u+v$.
The matrices $H_X\in\mathbb{F}_2^{9\times27}$ and
$H_Z\in\mathbb{F}_2^{27\times27}$ have ranks $8$ and $16$,
respectively, so $k=27-8-16=3$.
Their row weights are $6$ and $4$, and their column weights are
$2$ and $4$.
Exhaustive enumeration of the logical-operator spaces
$\ker H_Z/\operatorname{rowsp}(H_X)$ and
$\ker H_X/\operatorname{rowsp}(H_Z)$ gives
X distance $d_X=6$ and Z distance $d_Z=3$.
We extract both check types using one ancilla per check and a
six-layer schedule containing $162$ CNOTs.

The $27$ redundant $Z$ checks obey $11$ independent syndrome relations.
For the block-row order in Eq.~\eqref{eq:HXHZ}, the product construction
provides the structured metacheck matrix
\begin{equation}
\label{eq:meta}
H_{\mathrm{meta}}=
\begin{bmatrix}\mathbf{C}&\mathbf{B}&\mathbf{A}\end{bmatrix},
\qquad H_{\mathrm{meta}}H_Z=0,
\end{equation}
which has rank $8$.
In our simulations, we instead use a matrix
$M\in\mathbb{F}_2^{11\times27}$ whose rows form a basis of the
complete left null space of $H_Z$, so that
\begin{equation}
\label{eq:complete_meta}
MH_Z=0,\qquad \ker M=\operatorname{im}H_Z.
\end{equation}
Thus $M\sigma=0$ is necessary and sufficient for a syndrome
$\sigma$ to be compatible with some data-qubit $X$ error.
These complete constraints enforce syndrome consistency during
repair.

\subsection{Depth-two $CCZ$ circuit}
\label{sec:tricycle_ccz}

We construct a code-space-preserving $CCZ$ circuit across three
copies of the code using the cup-product formula of
Ref.~\cite{menon2026magic}.
Let $(\alpha_1,\alpha_2,\alpha_3)=(a,b,c)$ and choose the
preorientations
\begin{equation}
\label{eq:preorientations}
\begin{aligned}
(a^{\mathrm{in}},a^{\mathrm{out}})&=(1,x),\\
(b^{\mathrm{in}},b^{\mathrm{out}})&=(1,y),\\
(c^{\mathrm{in}},c^{\mathrm{out}})&=(x,y),
\end{aligned}
\end{equation}
with empty free parts.
These singleton partitions satisfy the sufficient integrated-Leibniz
conditions for code-space preservation~\cite{menon2026magic}.
For this binomial case, the ordinary and symmetric triple
cup-product constructions agree.

Write $p^{(i)}$ for the physical-qubit basis vector labeled by
$p\in G$ in sector $i$.
The trilinear function $f_{\mathrm{CCZ}}$ is specified on these
basis vectors by
\begin{equation}
\label{eq:fccz}
\begin{aligned}
&f_{\mathrm{CCZ}}(p^{(i)},q^{(j)},r^{(k)})\\
&\quad=\Delta_{ijk}\Bigl|
 r\alpha_i^{\mathrm{in}}\alpha_j^{\mathrm{in}}
 \cap q\alpha_i^{\mathrm{in}}\alpha_k^{\mathrm{out}}
 \cap p\alpha_j^{\mathrm{out}}\alpha_k^{\mathrm{out}}
\Bigr|\bmod 2,
\end{aligned}
\end{equation}
where $\Delta_{ijk}=1$ when $i,j,k$ are pairwise distinct and is
zero otherwise.
Products are evaluated in $R$, and intersections and cardinalities
refer to their supports in the group basis.
A physical $CCZ$ acts on the corresponding qubits of the first,
second, and third code blocks precisely when
$f_{\mathrm{CCZ}}(p^{(i)},q^{(j)},r^{(k)})=1$.
Extending $f_{\mathrm{CCZ}}$ trilinearly to binary strings gives
\begin{equation}
\label{eq:ccz_phase}
U_{\mathrm{CCZ}}\lvert\mathbf{u},\mathbf{v},\mathbf{w}\rangle
=(-1)^{f_{\mathrm{CCZ}}(\mathbf{u},\mathbf{v},\mathbf{w})}
\lvert\mathbf{u},\mathbf{v},\mathbf{w}\rangle.
\end{equation}

For Eq.~\eqref{eq:preorientations}, the three supports in
Eq.~\eqref{eq:fccz} are singletons.
A gate is therefore present exactly when $i,j,k$ are pairwise
distinct and
\begin{equation}
\label{eq:ccz_singleton_condition}
r\alpha_i^{\mathrm{in}}\alpha_j^{\mathrm{in}}
=q\alpha_i^{\mathrm{in}}\alpha_k^{\mathrm{out}}
=p\alpha_j^{\mathrm{out}}\alpha_k^{\mathrm{out}}.
\end{equation}
There are nine solutions for each of the six sector permutations,
giving $54$ physical $CCZ$ gates on $81$ data qubits.
The sector permutations $(1,2,3)$, $(2,3,1)$, and $(3,1,2)$ form
the first layer, and $(1,3,2)$, $(2,1,3)$, and $(3,2,1)$ form
the second.
Each layer contains $27$ mutually disjoint gates, and every qubit
participates once per layer, establishing the optimal depth of two
for this gate set.

We also verify code-space preservation directly:
\begin{equation}
\label{eq:ccz_coboundary}
\begin{aligned}
f_{\mathrm{CCZ}}(s,u,v)&=0,\\
f_{\mathrm{CCZ}}(u,s,v)&=0,\\
f_{\mathrm{CCZ}}(u,v,s)&=0,
\end{aligned}
\end{equation}
for all $s\in\operatorname{rowsp}(H_X)$ and $u,v\in\ker H_Z$.
Consequently, $f_{\mathrm{CCZ}}$ induces a well-defined trilinear
form on the logical $X$-operator classes, and its evaluation on
logical $X$ representatives specifies the logical $CCZ$
connectivity~\cite{menon2026magic}.
Applied to $\lvert\overline{+,+,+}\rangle^{\otimes3}$, with all
nine logical qubits initialized in $\lvert+\rangle$, the circuit
prepares an encoded hypergraph magic state.

\subsection{Details of the simulation of the magic-state preparation}
\label{sec:simulation-details}
Each code block is prepared by initializing all $n_{\mathrm{code}}=27$ data qubits in $\ket{+}$ and
running one round of the depth-$6$ syndrome-extraction circuit under circuit-level
depolarizing noise, i.e., $p_2=10^{-3}$ per CNOT. On $\ket{+}^{\otimes n_{\mathrm{code}}}$ the $X$-checks are
deterministic, so we postselect on a trivial $X$ syndrome. The $Z$-checks are random and
are instead fixed in a single shot---the measured $Z$ syndrome is first repaired using
the complete metacheck set of Sec.~\ref{sec:tricycle_code}, then decoded by BP+OSD and the
inferred $X$ correction applied. What survives is a residual data Pauli $E$, the exact
deviation of the prepared block from $\ket{\overline{+}}$. We obtain $E$ from stabilizer simulation. Every sample of the protocol draws $n_{\mathrm{prep}}$ fresh, independent
residuals, one per prepared block. The residuals are then injected into the $CCZ$ step as faults at $t=0$. Each physical $CCZ$ then fails with total
probability $p_{CCZ}=2\times 10^{-3}$, drawn from the $7$ pure-$Z$ Pauli errors with probability $\frac{10}{11}p_{CCZ}$  and $56$ remaining three-qubit Paulis with probability $\frac{1}{11}p_{CCZ}$.

After the transversal $CCZ$ on the $n_{\mathrm{blk}}=3$ data blocks, one further
$\ket{\overline{+}}$ block is prepared by the same noisy routine and coupled to the data
by transversal CNOTs with the data as control. In the \emph{shared} variant (protocol E) a single
ancilla block is coupled to each data block in turn, giving $n_{\mathrm{blk}}$
sequential layers of $n_{\mathrm{code}}$ CNOTs. In the Steane style syndrome extraction (protocol D), one ancilla per data block is
used and all $n_{\mathrm{blk}}n_{\mathrm{code}}$ CNOTs occur in one layer.

We do not simulate the destructive transversal $Z$ readout of the ancilla, whose
single-qubit outcomes are random on $\ket{\overline{+}}$. The same deterministic
information is carried by the ancilla's $Z$ \emph{stabilizers}: back-propagating
$S_z^{\mathrm{anc}}$ through the transversal CNOT (ancilla as target) gives
$S_z^{\mathrm{anc}}S_z^{\mathrm{data}}$, whose outcome, in the noiseless extraction case, is
\begin{equation}\label{eq:steane-syn}
  H_Z\big(x^{\mathrm{anc}}\oplus x^{\mathrm{data}}\big),
\end{equation}
i.e. exactly the Steane syndrome. We therefore measure these $r_Z$ operators virtually
and postselect them on $+1$. With a shared ancilla, Eq.~\eqref{eq:steane-syn} carries the
\emph{sum} over data blocks, so identical $X$ errors in two blocks cancel and are
invisible to the Steane check---though not to the data $X$ sector---while an ancilla $Z$
fault propagates back onto every data block whose CNOT follows it.

\paragraph{Restoring the ancilla block.}
The ancilla must be returned exactly to $\ket{\overline{+}}$, or it remains correlated
with the data and contaminates the reported fidelity. In the final frame it carries
$X^{\alpha}Z^{\beta}$. Postselecting Eq.~\eqref{eq:steane-syn} forces $H_Z\alpha=0$, so
$X^{\alpha}\in\langle\,X\text{-stabilizers},\,X\text{-logicals}\rangle$ and already acts
trivially on $\ket{\overline{+}}$. What remains is $Z^{\beta}$, which acts trivially only
if $Z^{\beta}$ lies in the $Z$-stabilizer group. Its class in that quotient is precisely
its commutation syndrome $\sigma$ with \emph{all} $X$-type generators. We therefore also
measure virtually the $r_X$ $X$-stabilizers \emph{and} the $K$ $X$-logicals of the
ancilla---$r_X+K=11$ operators here---and apply a $Z$-type Pauli $Z^{G\sigma}$, where $G$
is any right inverse of the matrix of $X$-type generators. Any solution suffices: two
differ by a $Z$ operator commuting with every $X$-type generator, i.e. by a
$Z$-stabilizer, which fixes $\ket{\overline{+}}$.

Once restored, the ancilla is an exact tensor factor of
both the corrected and the ideal state, so the reported fidelity is the fidelity of the
data blocks alone. For the shared gadget this leaves $35$ measured $X$-type and $64$
measured $Z$-type operators, of which $40$---the $24$ data $X$-stabilizers and the $16$
Steane $Z$-checks---are postselected. The remaining data $Z$ syndrome is corrected by the
same per-block BP+OSD decoder.

Finally, we explain how the acceptance fraction is calculated. Acceptance fraction factorizes into the preparation and the post-$CCZ$ stage. Let $\alpha_{\mathrm{prep}}$
be the per-block acceptance of the $X$-postselected $\ket{\overline{+}}$ preparation. A
sample requires all $n_{\mathrm{prep}}$ blocks to pass independently---$n_{\mathrm{prep}}=3$
with no round, $4$ for the shared gadget ($3$ data $+\,1$ ancilla) and $6$ for the
per-block gadget---so the preparation contributes $\alpha_{\mathrm{prep}}^{\,n_{\mathrm{prep}}}$.
The second stage rate $\alpha_{\mathrm{post}}$ is the fraction of prepared samples passing every
postselection after the $CCZ$, and is measured conditioned on the preparation
having been accepted. The total acceptance is therefore
\begin{equation}\label{eq:accept}
  \alpha_{\mathrm{acc}} \;=\; \alpha_{\mathrm{prep}}^{\,n_{\mathrm{prep}}}\;\alpha_{\mathrm{post}}.
\end{equation}
For the shared gadget at $p_2=10^{-3}$, for instance, $\alpha_{\mathrm{prep}}=0.917$ and
$\alpha_{\mathrm{post}}=0.699$ give $\alpha_{\mathrm{acc}}=0.917^{4}\times0.699=0.495$.

\paragraph{Approximate simulation.}
We compare the exact simulation with the approximate simulation. We adapt the approximate simulation used in \textcite{menon2026magic}, but we simulate all three data blocks rather than a single block in their work. The only difference between the approximate simulation in this work and the exact simulator, then, is that we skip the CCZ circuit and twirl the CZ errors that arise from $X$ errors propagating through it. That is, whenever a CZ error occurs, we replace it with an error drawn uniformly at random from $\{II, IZ, ZI, ZZ\}$. Ignoring the CCZ gates in the circuit makes this simulation effectively an $ X$-basis memory experiment that detects logical Z errors. This is believed to be a good approximation for obtaining the output logical fidelity because, on the one hand, logical Z errors commute with the logical $CCZ$ circuit and can act directly on the initial $\ket{\overline{+}}^{\otimes k}$ state. On the other hand, tricycle codes have biased distance \cite{menon2026magic}, with X distance $ d_X$ much larger than Z distance, making the logical X error rate negligible compared with the logical Z error rate.

\subsection{Fault tolerance of depth-two CCZ circuit}
\label{sec:ccz-single-faults}

\begin{figure}[H]
    \centering
    \includegraphics[width=0.8\textwidth]{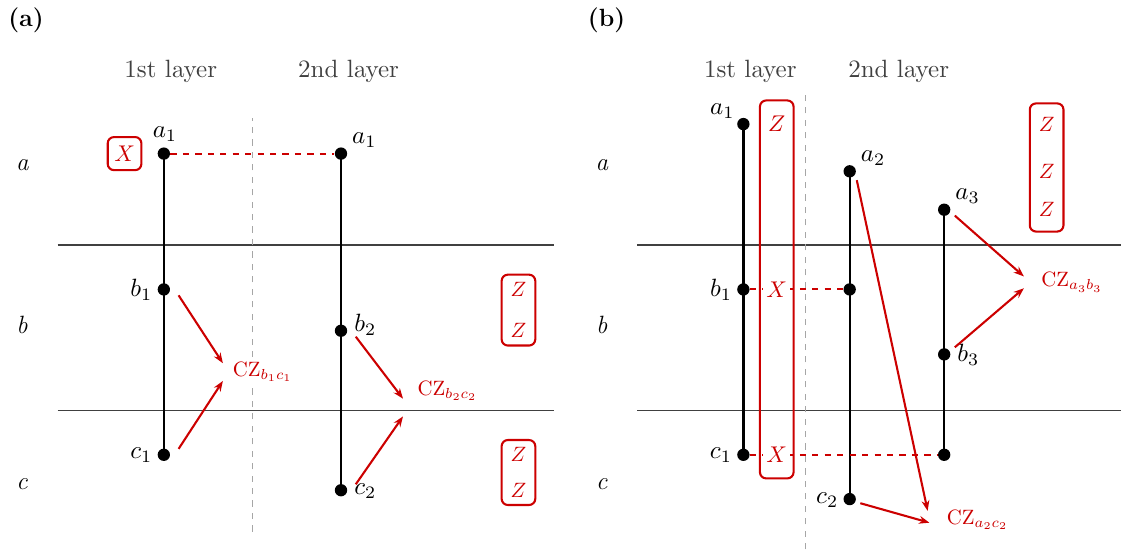}
    \caption{
    Error propagation through the depth-two CCZ circuit on code
    blocks $a$, $b$, and $c$.
    (a) An input error $X_{a_1}$ propagates through two CCZ gates,
    acquiring $\mathrm{CZ}_{b_1c_1}\mathrm{CZ}_{b_2c_2}$.
    Their Pauli expansion contains a term with weight-two $Z$
    errors in both blocks $b$ and $c$, which can be uncorrectable.
    (b) A weight-three fault $Z_{a_1}X_{b_1}X_{c_1}$ after a
    first-layer CCZ gate acquires
    $\mathrm{CZ}_{a_2c_2}\mathrm{CZ}_{a_3b_3}$ in the second layer.
    The resulting Pauli expansion contains
    $Z_{a_1}Z_{a_2}Z_{a_3}X_{b_1}X_{c_1}$, leaving a logical $Z$
    on block $a$ after correction of the $X$ errors.
    Red annotations indicate faults, propagated CZ factors,
    and representative $Z$ components of their Pauli expansions.
    }
    \label{fig:ccz-fault-propagation}
\end{figure}

The fault-tolerance limitations of the CCZ circuits for tricycle codes with depth larger than one were identified in concurrent
work~\cite{menon2026factories}. In this section, we analyze the depth-two circuit for the $[[27,3,3]]$ code explicitly and explain how extra Z syndromes help with error detection.

We consider a CCZ circuit on three $[[27,3,3]]$ CSS code blocks:
two layers of 27 disjoint CCZ gates, each coupling one qubit from each block.
In our noise model, after each gate, a non-identity three-qubit Pauli occurs with certain probability. Thus, weight-one, -two, and -three gate
errors are all first-order faults. We enumerate each gate fault individually,
and additionally test $X,Y,Z$ on each of the 81 input qubits to represent
residual preparation errors, without assigning these input faults an
occurrence rate. This gives $54\times63+81\times3=3645$ fault-location/Pauli
pairs. All subsequent syndrome measurements are ideal.

We insert each error at its specified location and propagate it through all
later CCZ gates. An $X$ propagated through CCZ acquires a CZ on the other
two qubits. We express each resulting CZ as a sum of Pauli operators on
those two qubits, $\mathrm{CZ}=(II+ZI+IZ-ZZ)/2$. Postselection on
trivial $X$ syndrome keeps only terms that commute with every $X$ stabilizer
in all three blocks. In this enumeration, at most one Pauli term per fault
survives. The probability of passing the $X$ checks, conditional on that
fault occurring, therefore equals the squared magnitude of the surviving
term's coefficient. We denote this probability by $P_X$.

There are 513 faults with non-zero probability of passing X checks, i.e., $P_X>0$, classified below and also see examples in \cref{fig:ccz-fault-propagation}. The other 3132,
including every nonidentity purely $Z$ fault, are rejected with certainty.
Gate faults are inserted immediately after the indicated layer's gate.
The three letters specify the Pauli operators in block order.
\begin{center}
\small
\begin{tabular}{llrr}
\toprule
Location & Pauli error & Count & $P_X$ \\
\midrule
Input & Single-qubit $X$ & 81 & $1/16$ \\
Layer 1 & $XII,IXI,IIX$ & 81 & $1/4$ \\
Layer 1 & $XXI,XIX,IXX$ & 81 & $1/16$ \\
Layer 1 & $XXX$ & 27 & $1/64$ \\
Layer 1 & $XZX,ZXX,XXZ$ (selected locations) & 27 & $1/16$ \\
Layer 1 & $XYX,YXX,XXY$ (selected locations) & 27 & $1/64$ \\
Layer 2 & $XII,IXI,IIX,XXI,XIX,IXX,XXX$ & 189 & $1$ \\
\bottomrule
\end{tabular}
\end{center}
For each row containing $Z$ or $Y$, exactly one of the three listed errors
per gate can pass (nine gate locations per listed error). Only the 189 last-layer faults pass with certainty.

When they pass the $X$ checks, the 54 first-layer faults with errors
$XZX$, $ZXX$, $XXZ$, $XYX$, $YXX$, or $XXY$ leave a nontrivial logical $Z$ after
their $X$ component is corrected. Their surviving $Z$ component commutes
with every $X$ stabilizer but is not a product of $Z$ stabilizers.
The remaining 459 surviving terms contain only $X$ errors and are removed
by correcting their $X$ component. All 513 surviving terms contain an
$X$ component detected by at least one $Z$ stabilizer.
Consequently, adding $Z$ syndrome postselection rejects
\emph{all 513} surviving fault branches, including all 54 with a logical
$Z$ component.

\paragraph{Recovering distance 3 with Z syndrome.}
We additionally show that by adding additional Z syndrome, no pairs of errors can be undetected while leaving a logical error. Consider the 3645 Pauli errors from the input of the CCZ circuit and the errors associated with the CCZ gates. Note that the $[[27,3,3]]$ code has X distance $d_X=6$ and X stabilizer check weight $6$. Therefore, the Z syndrome can detect X errors up to weight 5. An individual input error or CCZ error can at most insert weight-1 $X$ errors on each code block. As a result, the only undetectable pairs of errors are either with pure $I$ and $Z$ components, or with X components haveing completely overlapping support. This substantially reduces the search space. For any such pairs of errors, we have the propagated error $D$ at the end that is a product of Pauli $Z$ and $CZ$ errors. One could expand $D$ in the Pauli operator basis and group them according to stabilizer equivalence, and eventually have
\begin{equation}
    \Pi D\Pi=\sum_{L_Z} \alpha_{L_Z} L_Z\Pi
\end{equation}
where $\Pi$ is the code space projector, $L_Z$ is some logical Z operator including the identity, up to stabilizer equivalence. The condition for detectability of such errors is then $\alpha_{L_Z}=0$ for all $L_Z\notin S$. Enumerating all such pairs of errors confirmed that the detectability condition is satisfied and therefore the circuit has distance 3 under the final perfect X and Z syndrome. This result applies to protocol D (see \cref{sec:syndrome-protocols}), where fault-tolerant Steane-style syndrome extraction is used.

We also checked the detectability condition for the protocol E (see \cref{sec:syndrome-protocols}) with the added CNOT gate errors. There, the detectable condition is that the residual error must be correctable if it is undetected by the X syndrome from the final transversal X measurement and the Z syndrome from the shared Steane code block. We found pairs of CCZ errors, each containing weight-2 X components, and together they can lead to undetectable logical errors. However, these pairs of errors only correspond to a negligible fraction of the logical infidelity according to our biased CCZ error model, and it performs the best among the A, C, D, E protocols (B is the ideal noiseless syndrome extraction case).

\subsection{Z syndrome-extraction protocols}
\label{sec:syndrome-protocols}

In this section, we present some Z syndrome extraction gadgets after the constant-depth CCZ circuit that we explored in this work, motivated by the circuit-distance recovery via incorporating Z syndrome mentioned in \cref{sec:ccz-single-faults}. We also note that in a concurrent work \cite{menon2026factories}, one could alternatively recover circuit distance using code concatenation.

We first briefly review that we initially prepare three $[[27,3,3]]$ code blocks with every logical
qubit in $|+\rangle$. For each block, we initialize the 27 physical
qubits in $|+\rangle$ and measure the $X$ and $Z$ stabilizers using
a syndrome-extraction circuit with six CNOT layers. We retain only
preparations with a trivial measured $X$ syndrome. The 27 measured
$Z$ checks satisfy 11 independent parity constraints, called
\emph{metachecks}, which identify inconsistent syndrome records.
We consider two preparation variants: repair the measured $Z$
syndrome using these constraints, or discard preparations that
violate any metacheck. In either variant, we decode the resulting
syndrome and apply the inferred $X$ correction to the data.
This prepares the logical plus state up to residual errors, which
are retained and fed into the next stage of the simulation. The three accepted blocks then undergo the depth-two CCZ circuit. See \cref{sec:simulation-details} for details.

All methods postselect on trivial final data $X$ syndrome.
For fidelity evaluation, the final data $Z$ syndrome is measured
ideally and decoded to correct remaining $X$ errors. The methods
differ in the additional $Z$-syndrome postselection after CCZ:

\noindent\textbf{A: No additional extraction.}
There is no additional $Z$-syndrome measurement circuit or
$Z$-syndrome postselection. The final $Z$ syndrome is used
only for correction.

\noindent\textbf{B: Ideal extraction.}
We additionally require trivial ideal data $Z$ syndrome in every
block. This provides a reference with perfect syndrome extraction.

\noindent\textbf{C: Bare-ancilla extraction.}
Each of the 27 weight-four $Z$ checks per block is measured using
a separate $|0\rangle$ ancilla, with data qubits controlling CNOTs
onto that ancilla. The CNOT ordering at this stage is obtained by deleting the CNOTs corresponding to the X checks from the full syndrome extraction during logical $\ket{\overline{+}}$ state preparation. We measure the ancillas in the $Z$ basis,
repair the extracted syndrome using metachecks, and retain the
shot only if the repaired syndrome is zero in every block.
This repair step is used for both preparation variants.

\noindent\textbf{D: Steane extraction.}
Each data block is coupled to its own encoded plus-state ancilla
block, prepared by the same procedure as the data. A transversal
CNOT applies one CNOT from each data qubit to the corresponding
ancilla qubit. All three block pairs interact in parallel.
Ancilla qubits are then measured in the $Z$ basis, and we require a 
trivial inferred $Z$ syndrome for each block.

\noindent\textbf{E: Shared Steane extraction.}
One encoded plus-state ancilla block is coupled successively
to the three data blocks by transversal CNOTs and measured once
at the end. We require a trivial syndrome inferred from this
ancilla readout. With ideal ancillas and couplings, this tests
the bitwise sum modulo two of the three data-block $Z$ syndromes.
In comparison, D tests each syndrome separately. Thus E uses fewer prepared
ancilla blocks, but matching syndromes in different blocks
can cancel.

\begin{center}
\begin{tabular}{llrr}
\toprule
Method & Final $Z$ extraction & Added CNOTs & CNOT layers \\
\midrule
A & None                            & 0       & 0       \\
B & Ideal reference                 & $0^{*}$ & $0^{*}$ \\
C & Bare ancillas                   & 324     & 6       \\
D & Three Steane ancilla blocks     & 81      & 1       \\
E & One shared Steane ancilla block & 81      & 3       \\
\bottomrule
\end{tabular}
\end{center}
These counts cover all three data blocks and only the final
extraction couplings, excluding state preparation and readout.
In C, the count is $3\times27\times4=324$. In D and E, it is
$3\times27=81$. Each of C--E also requires one parallel
ancilla-measurement stage. For B, $0^{*}$ denotes the absence
of an explicit simulated extraction circuit.

\putbib[references]
\end{bibunit}
\end{document}